\numberwithin{equation}{section}
\newtheorem{thm}{Theorem}[section]
\newtheorem{lemma}[thm]{Lemma}
\theoremstyle{definition}
\newtheorem{rmk}[thm]{Remark}
\newtheorem{defn}[thm]{Definition}
\newtheorem{assumption}[thm]{Assumption}
\DeclareMathOperator{\tr}{tr}
\newcommand{\Hilbert}{\mathcal{H}}
\newcommand{\smc}{\mathrm{sc}}
\newcommand{\IM}{\textup{Im\,}}
\newcommand{\dapp}{\mathcal{M}}
\newcommand{\R}{\mathbb{R}}
\newcommand{\C}{\mathbb{C}}
\newcommand{\N}{\mathbb{N}}
\newcommand{\ii}{\mathrm{i}}
\newcommand{\ee}{\mathrm{e}}
\newcommand{\dd}{\mathrm{d}}
\title{Normal Typicality and Dynamical Typicality for a  Random Block-Band Matrix Model}
	\date{\today}
\begin{document}
	\maketitle

	\vspace{0.25cm}
	
	\renewcommand{\thefootnote}{\fnsymbol{footnote}}

	\noindent
	\mbox{}%
	\hfill%
	\begin{minipage}{0.25\textwidth}
		\centering
		{L\'aszl\'o Erd\H{o}s}\footnotemark[1]\\
		\footnotesize{\textit{lerdos@ist.ac.at}}
	\end{minipage}
	\hfill%
	\begin{minipage}{0.25\textwidth}
		\centering
		{Joscha Henheik}\footnotemark[2]\\
		\footnotesize{\textit{joscha.henheik@unige.ch}}
	\end{minipage}
	\hfill%
	\begin{minipage}{0.25\textwidth}
		\centering
		{Cornelia Vogel}\footnotemark[1]\footnotemark[3]\\
		\footnotesize{\textit{cornelia.vogel@math.lmu.de}}
	\end{minipage}
	\hfill%
	\mbox{}%
	\footnotetext[1]{Institute of Science and Technology Austria, Am Campus 1, 3400 Klosterneuburg, Austria. 
	}
\footnotetext[2]{Department of Mathematics, University of Geneva, Rue de Conseil Général 7-9, 1205 Geneva, Switzerland.}
	\footnotetext[3]{Department of Mathematics, LMU Munich, Theresienstr. 39, 80333 Munich, Germany.}
	
	\renewcommand*{\thefootnote}{\arabic{footnote}}
	\vspace{0.25cm}

	\begin{abstract} 
We prove \emph{normal typicality} and \emph{dynamical typicality} for a (centered) random block-band matrix model with block-dependent variances. A key feature of our model is that we achieve intermediate equilibration times, an aspect that has not been proven rigorously in any model before. Our proof builds on recently established concentration estimates for products of resolvents of Wigner type random matrices \cite{ER24} and an intricate analysis of the deterministic approximation. 
	\end{abstract}
~	\\[1mm]
{\footnotesize	Keywords: Quantum dynamics, equilibration, normal typicality, dynamical typicality, Wigner type matrix \\
	MSC Classes: 60B20, 82C10}
\section{Introduction}
We consider a closed macroscopic quantum system in a pure state $\psi_0$ which evolves unitarily according to $\psi_t=\ee^{-\ii Ht}\psi_0$ where $H$ is the Hamilton operator on a large finite dimensional
Hilbert space $\Hilbert$.  Following von Neumann~\cite{vonNeumann29}, we take a fixed orthogonal decomposition of 
$\Hilbert$ into subspaces $\Hilbert_\nu$ (so called \emph{macro spaces}) corresponding to different \emph{macro states} $\nu$,
\begin{align} \label{eq:msdecomp}
	\Hilbert=\bigoplus_\nu\Hilbert_\nu.
\end{align}
 Usually, one of these subspaces has by far the largest dimension and, by analogy to classical mechanics, corresponds to the \emph{thermal equilibrium macro space}.

Let $P_\nu$ denote the  orthogonal  
projection to $\Hilbert_\nu$. Von Neumann~\cite{vonNeumann29} showed that for random Hamiltonians of the form $H=U^* D U$, where $D$ is a deterministic diagonal matrix satisfying certain non-degeneracy assumptions and $U$ is Haar-distributed, every $\psi_0\in\mathbb{S}(\Hilbert)=\{\phi\in\Hilbert: \|\phi\|=1\}$ evolves so that for most times $t\geq 0$,
\begin{align} \label{eq:normaltyp}
	\|P_\nu\psi_t\|^2 \approx \frac{d_\nu}{N} \quad \forall \nu,
\end{align}
provided that $d_\nu:=\dim\Hilbert_\nu$ and $N:=\dim\Hilbert$ are sufficiently large (and under some further not very restrictive assumptions). This phenomenon is called \emph{normal typicality} (a.k.a.~von Neumann's \emph{quantum ergodic theorem}) and it was rediscovered and strengthened in \cite{GLMTZ10,GLTZ10,GLMTZ10b}. 

However, in von Neumann's setting, the energy eigenbasis of $H$ is unrelated to the orthogonal decomposition of the Hilbert space and as a consequence, for any initial state equilibration takes place almost immediately, see, e.g., \cite{GHT13,GHT14,GHT15}. In contrast to that, in very general settings, bounds on equilibration time can easily exceed the age of the universe \cite{SF12,TTV23a}. 
In this paper, we study an intermediate structured model given
by a random block-band matrix ensemble, with the blocks aligned to the macro space decomposition \eqref{eq:msdecomp}. Hence, in this model, we expect equilibration to take place on different and non-trivial time scales.

More precisely, in the following we model the Hamiltonian $H$ by a random $ N \times N$ matrix with a block structure in a basis that diagonalizes the projections $P_\nu$. We consider two prototypical
models, a $2\times 2$ and a $3\times 3$ block matrix, with blocks of very different sizes {(i.e., corresponding to macro spaces of very different dimensions)}, encoded in a small {$N$-independent} parameter $\lambda \in (0,1)$. The entries of $H$ are independent (up to conjugate symmetry) centered random variables with variances depending\footnote{\label{ftn:WT}In other words, the models we consider are centered \emph{Wigner type random matrices} \cite{AEK17, ER24}, allowing for non-constant variances, and, moreover, for zero-blocks of size $\lfloor cN\rfloor \times \lfloor cN\rfloor$ for $c<1$.} on the block; see Assumptions~\ref{ass:2} and~\ref{ass:3} for the precise definitions.

\medskip 

 We are mainly interested in the behavior of the curves $t\mapsto \|P_\nu\psi_t\|^2$ with initial state $\psi_0\in\mathbb{S}(\Hilbert_\mu)$ for different macro states $\mu, \nu$. They describe the quantum 
 probability that an initial state from the macro space $\mu$ ends up in the macro space $\nu$ after
 time $t$.
  Besides $t\mapsto \|P_\nu\psi_t\|^2$ that is strongly fluctuating \cite{TTV23a, TTV23b}, we also consider its version averaged over the initial state, i.e., we define
 \begin{equation}
w_{\mu \nu}(t) := \mathbb{E}_\mu{\|P_\nu\psi_t\|^2} =\frac{1}{d_\mu} \tr[P_\mu \exp(\ii Ht)P_\nu\exp(-\ii Ht)]\,, 
 \end{equation}
 where the expectation is taken with respect to the uniform distribution on $\mathbb{S}(\Hilbert_\mu)$ for $\psi_0$. In fact, we will prove stronger concentration bounds for  $w_{\mu \nu}(t)$ than for $\|P_\nu\psi_t\|^2$.  Moreover, we will show that the curves $t \mapsto \|P_\nu\psi_t\|^2$ and $t \mapsto w_{\mu \nu}(t)$ exhibit universal behavior, a feature that is called \emph{dynamical typicality} \cite{BG09,MGE11,BRGSR18,Reimann18a,Reimann18b,RG20}.  Additionally, we also study the  infinite time average\footnote{For a general  function $f:[0,\infty)\to\mathbb{C}$, we denote the infinite time average by $\overline{f(t)} = \lim_{T\to\infty} T^{-1}\int_0^T f(t)\, \mathrm{d} t$, whenever this limit exists. }
 \begin{align} \label{eq:dapp}
 	\dapp_{\mu\nu} := \overline{w_{\mu \nu}(t)} = \frac{1}{d_\mu}\sum_e \tr(\Pi_e P_\mu \Pi_e P_\nu) \,,
 \end{align}
where $\Pi_e$ denotes the projection onto the eigenspace of $H$ with eigenvalue $e$. \normalcolor We prove that, for long times $\|P_\nu\psi_t\|^2 \approx  \dapp_{\mu\nu}$, which is called \emph{generalized normal typicality} \cite{TTV23a}.
 
 To summarize, we investigate $t\mapsto \|P_\nu\psi_t\|^2$ on three levels of complexity, schematically indicated as
 	\begin{equation} \label{eq:3obj}
 	\Vert P_\nu \psi_t \Vert^2  \xrightarrow{\psi_0 \in \mathbb{S}(\Hilbert_\mu) \ \text{average}} w_{\mu \nu}(t) \xrightarrow{t \ \text{average}} \mathcal{M}_{\mu \nu}\,. 
 \end{equation}
 We now describe our results for the three objects in \eqref{eq:3obj} in the two random matrix models considered in this paper in more detail.  
 
 Our first result (Theorems~\ref{thm: dappmunu} and~\ref{thm: dappmunu3}, respectively) shows that with very high probability, $\dapp_{\mu \nu} \approx d_\nu/N$, i.e.~our models exhibit \emph{normal typicality} as discussed around \eqref{eq:normaltyp}. 
 
 Our second result (Theorems~\ref{thm: wmunu} and~\ref{thm: wmunu3}, respectively) shows that with very high probability $w_{\mu\nu}(t) \approx d_\nu/N + f_{\mu \nu}(t, \lambda)$ for some function $f_{\mu \nu}(t, \lambda)$, that we identify to leading order, 
 in the regime where  $t$ is sufficiently large and $\lambda$ sufficiently small. In particular, we find that $f_{\mu \nu}(t, \lambda) \to 0$ as $t \to \infty$. From the more precise formulas, we can immediately read off the equilibration time scale, which heavily depends on the dimensions of the two involved macro spaces $\Hilbert_\mu, \Hilbert_\nu$ and thus, in particular, on the small parameter $\lambda$ encoding their relative sizes. Finding tunable (by the parameter $\lambda$) equilibration times is one of the main features of our model. This allows to interpolate between the extremely long/short (and thus rather unphysical) time scales discussed above, the only cases  for which rigorous results existed before.
 
In our third and final result (Theorems~\ref{thm: Pnupsi} and~\ref{thm: Pnupsi3}, respectively), we also study the quantities $\|P_\nu\psi_t\|^2$ for a fixed $\psi_0 \in \mathbb{S}(\mathcal{H}_\mu)$ and find the same asymptotics $\|P_\nu\psi_t\|^2 \approx d_\nu/N + f_{\mu \nu}(\lambda, t)$. All our approximate statements are understood to be valid in the way that we first take $N \to \infty$ (thermodynamic limit) and afterwards $t \to \infty$ and $\lambda \to 0$.

\medskip

The proofs of our main results are based on concentration estimates for products of resolvents, $G(z) := (H-z)^{-1}$ for $z \in \C \setminus \R$, of the random matrices $H$ considered. In fact, it is a remarkable feature of random matrices, that products of resolvents tend to become deterministic as the matrix size goes to infinity. The resolvent products concentrate around a nontrivial leading term and a fluctuation estimate around that determinstic leading term is called a \emph{local law}. The general theory of \emph{multi-resolvent local laws} has systematically been developed in last few years, starting from completely mean-field \emph{Wigner matrices} \cite{ETHpaper, multiG, A2, edgeETH, OTOC} and their deformations (i.e.~with non-zero expectation matrix) \cite{equipart, decor}. 

More precisely, our proofs of Theorems \ref{thm: dappmunu} and \ref{thm: dappmunu3} crucially rest on an important 
consequence  of a recently established two-resolvent local law for \emph{Wigner type matrices} (characterized by a non-constant variance profile, see Footnote~\ref{ftn:WT}) in \cite{ER24}. This consequence is the \emph{Eigenstate Thermalization Hypothesis} (ETH), originally introduced by Deutsch and Srednicki \cite{Deutsch, Srednicki} in the 1990's, which has since then played a fundamental role in the question of thermalization in closed quantum mechanical systems. In our context, the ETH for Wigner type matrices in \cite{ER24} (see \cite{ETHpaper} for the first result on Wigner matrices) allows us to control the overlaps $\tr(\Pi_e P_\mu \Pi_e P_\nu)$ of the (random) eigenprojections $\Pi_e$ with the (deterministic) macro space projectors $P_\mu, P_\nu$ in \eqref{eq:dapp}. 

In order to treat the time dependent quantities $\Vert P_\nu \psi_t\Vert^2$
 and $w_{\mu \nu}(t)$ 
  in \eqref{eq:3obj}, we express the time evolution via a suitable contour integral
\begin{equation*}
\ee^{\ii t H} = \frac{1}{2 \pi \ii} \oint_\gamma \ee^{\ii t z} G(z) \dd z \,, 
\end{equation*}
where $\gamma$ encircles the spectrum of $H$. After application of appropriate two-resolvent local laws, establishing the approximation $d_\nu/N + f_{\mu \nu}(\lambda, t)$ boils down to computing a (double) contour integral of the deterministic approximation to the product of resolvents. Extracting the time-dependent term in the approximation $d_\nu/N + f_{\mu \nu}(\lambda, t)$ in presence of the small parameter $\lambda$ requires a delicate stationary phase approximation with a singular integrand. We point out that the analysis required in our current setting is very different from \cite{pretherm, echo}, where equilibration and thermalization in presence of a small parameter has been investigated: In  \cite{pretherm, echo}, the authors considered matrices of the form $H_0 + \lambda W$, with $W$ being a mean-field Wigner matrix and $H_0$ a deterministic deformation. Hence, in these works, $\lambda$ models the strength of the mean-field noise of the model. In contrast to that, in the present work, $\lambda$ encodes the inhomogeneity of the noise throughout the different blocks. 

\medskip

We conclude this introduction by commenting on previous results on dynamical typicality and normal typicality. For quite general Hamiltonians it was shown in \cite{TTV23a} that for any $\mu$, most $\psi_0\in\mathbb{S}(\Hilbert_\mu)$ are such that for most $t\geq 0$,
\begin{align}
	\|P_\nu\psi_t\|^2 \approx \dapp_{\mu \nu}\quad \forall \nu,
\end{align}
provided that $d_\mu$ is large and the eigenvalues and eigenvalue gaps of $H$ are not too highly degenerate. While \cite{TTV23a} was only concerned with absolute errors, in further development it was shown in \cite{TTV23b} that the relative errors are small as well if $H$ is modeled by a suitable random matrix. 
It was also conjectured that normal typicality, i.e.~$\dapp_{\mu\nu}\approx d_\nu/N$, holds, but only a
 lower bound $\dapp_{\mu\nu}\gtrsim (d_\nu/N)^{16}$ was proved instead.
 As to dynamical typicality, the approximation $\|P_\nu\psi_t\|^2 \approx  w_{\mu \nu}(t) $ was  also
 shown in~\cite{TTV23a} as the dimensions are large, $d_\mu\to\infty$, but no specific formula was given for $w_{\mu \nu}(t) $. In particular no analysis of the equilibration times was available.   We remark that a similar result was 
 rigorously  obtained in~\cite{MGE11,BRGSR18}, see~\cite{TTV23a}[Section~2.2] for more details.
 Summarizing, the current work provides a detailed description of a concrete model featuring an intermediate equilibration time scale,
 while previous papers considered more general Hamiltonians with much less explicit results.

\medskip

The remainder of this paper is organized as follows: In Section~\ref{sec:main} we formulate and discuss our main results first for the $2\times 2$ block model and later for the $3\times 3$ block model. In Section~\ref{sec:2proof} we give the proofs of the results regarding the case of two macro spaces and in Section~\ref{sec:3proof} we prove analogous results for the model with three macro spaces. Some additional proofs are given in Appendix~\ref{app:aux}.

\subsection*{Notations} Let $\Hilbert$ be a Hilbert space of dimension $N=\dim\Hilbert<\infty$, i.e.~$\mathcal{H} = \C^N$. For a vector $\psi\in\Hilbert$ we denote by $\|\psi\|$ the Hilbert space norm of $\psi$. Moreover, we write 
\begin{align*}
	\mathbb{S}(\Hilbert) = \{\phi\in\Hilbert: \|\phi\|=1\}
\end{align*}
for the unit sphere of $\Hilbert$.
For the spectrum of an operator $A$ on $\Hilbert$ we use the notation $\mathrm{spec}(A)$ and its normalized trace is denoted by $\langle A\rangle := N^{-1} \tr(A)$. For positive integers $k,l\in\mathbb{N}$ we write $E_{k,l}$ for the $k\times l$ matrix whose entries are all equal to $1$ and $\mathbf{1}$ is the identity on $\Hilbert$. Furthermore, for $k \in \N$ we denote the set of positive integers up to $k$ by $[k] := \{1,2,..., k\}$. For vectors $\psi,\phi\in\Hilbert=\mathbb{C}^N$ and matrices $A\in\mathbb{C}^{N\times N}$ we define
\begin{align*}
	\langle\psi|\phi\rangle := \sum_i \overline{\psi_i}\phi_i, \qquad \langle\psi|A|\phi\rangle := \langle\psi|A\phi\rangle.
\end{align*}
Moreover, for $\lambda>0$ and $t\in\mathbb{R}$ we denote by $C(\lambda)$ and $C(\lambda,t)$ constants depending only on their arguments and whose precise values are irrelevant and  might change from line to line.

Additionally, we will need the notion of \textit{stochastic domination} (see, e.g., \cite{semicirclegeneral}): 
\begin{defn}[Stochastic domination] Let $X=(X_N)$ and $Y=(Y_N)$ be two sequences of non-negative random variables. We say that $Y$ stochastically dominates $X$ uniformly in $N$ and write $X\prec Y$ if for every $\varepsilon>0$ and $\gamma>0$ there exists $N_0=N_0(\varepsilon,\gamma)\in\mathbb{N}$ such that for all $N\geq N_0(\varepsilon,\gamma)$,
	\begin{align}
		\mathbb{P}\left(X_N\geq N^\varepsilon Y_N\right)< N^{-\gamma}.
	\end{align}
	For complex-valued $X$ satisfying $|X|\prec Y$, we write $X=\mathcal{O}_{\prec}(Y)$.
\end{defn}
Moreover, we say that an event holds ``with very high probability'' if for any fixed $\gamma>0$ the probability of the event is bigger than $1-N^{-\gamma}$ for $N\geq N_0(\gamma)$.

\section{Main Results} \label{sec:main}

In this section, we present our main results for two and three macro spaces in Sections \ref{subsec:main2} and \ref{subsec:main3}, respectively.
\subsection{Two macro spaces} \label{subsec:main2}Let $\lambda>0$ be a small ($N$-independent) parameter, let $D\geq 1$ and let $N = (1+\lambda)D$ be the dimension of the whole Hilbert space $\Hilbert$.\footnote{To avoid technicalities, we shall henceforth suppose that $N$ is in fact an integer.} We partition $\Hilbert$ into two macro spaces of dimension $d_1=\lambda D$ and $d_2 = D$, i.e., 
\begin{equation*}
\Hilbert = \mathbb{C}^{N} \simeq \mathbb{C}^{d_1}\oplus\mathbb{C}^{d_2} =: \Hilbert_1 \oplus \Hilbert_2 \,. 
\end{equation*}
We model the Hamiltonian by a Hermitian random matrix $H=(h_{ij})_{i,j \in [N]}$ whose entries $h_{ij} = \overline{h_{ji}}$ satisfy the following conditions.  
\begin{assumption}[$2 \times 2$ block matrix model] \label{ass:2}
For $i \le j$, the entries $h_{ij}$ are centered and independent random variables. The variance matrix $S=(\mathbb{E}|h_{ij}|^2)_{i,j}$ is given by
\begin{equation*}
	S = \left(\begin{matrix}
		S_{11} & S_{12}\\
		S_{21} & S_{22}
	\end{matrix}\right),
\end{equation*}
where
\begin{equation} \label{eq:2variances}
	S_{11} = \frac{1}{N\lambda} E_{d_1,d_1}, \quad S_{12} = S_{21}^* = \frac{\lambda}{N} E_{d_1,d_2}, \quad S_{22} = \frac{1+\lambda-\lambda^2}{N} E_{d_2,d_2}.
\end{equation}
Moreover, we assume that all centered moments of $\sqrt{N} H$ are uniformly bounded in $N$, i.e., for all $p\in\mathbb{N}$ there exists a ($\lambda$-dependent) constant $C_p(\lambda)>0$ such that, uniformly in $1\leq i,j\leq N$,
\begin{align}
	\mathbb{E}|\sqrt{N} h_{ij}|^p \leq C_p(\lambda). \label{ineq: mom unif bdd}
\end{align} 
\end{assumption}

We remark that the variances in the blocks are chosen such that the rows and columns of $S$ sum up to one ($S$ is a doubly stochastic matrix). As a consequence, the solution of the corresponding quadratic vector equation \cite{AEK19} is given by the Stieltjes transform of the semicircular density (see the proof of Theorem~\ref{thm: dappmunu}). Having this explicit solution facilitates the computations and therefore in the present work we restrict ourselves to this setting. However, we believe that the conclusions we draw concerning the quantities $\dapp_{\mu\nu}$, $w_{\mu\nu}(t)$ and $\|P_\nu\psi_t\|^2$ should be very similar also in a more general setting where the variances in the two diagonal blocks are of order $1/(N\lambda)$ and $1/N$, respectively, while in the off-diagonal blocks they are of order $\lambda/N$ but not fine-tuned to obtain exactly the Stieltjes transform of the semicircular density of states as the solution of the quadratic vector equation. In this case, however, the rigorous proof would be more cumbersome.

We now explain the physical reason for the choice of the $\lambda$-scaling in the three blocks in~\eqref{eq:2variances}. Ignoring the overall $1/N$ factor, notice that the variances in the $S_{11}$ block are bigger by a factor $1/\lambda$ than those in the $S_{22}$ block; this compensates for the different dimensions of the two diagonal blocks and guarantees that typical states in both macro spaces live roughly on the same energy scale. The off-diagonal blocks weakly couple the two macro spaces and allow for transitions between them. The coupling parameter in the off-diagonal blocks determines the time scale of this transition; with our choice it is of order $1/\lambda$. In a more general setup, the size of the off-diagonal elements, hence the transition time scale, could be chosen independently of the $\lambda$ parameter describing the relative sizes of the macro spaces. However, for easier calculation, we have chosen the simplest model where these two parameters are set to be the same.

\subsubsection{Results for two macro spaces}For the $2 \times 2$ block matrix model defined above we have the following results. Their proofs are given in Section \ref{sec:2proof}. 

\begin{thm}[Normal typicality: $\dapp_{\mu \nu} \approx d_\nu/N$]\label{thm: dappmunu} Let $H$ be a random matrix satisfying Assumption \ref{ass:2} and take $\mu, \nu \in \{1,2\}$. 
Denote
	\begin{equation} \label{eq:Mmunudef}
		\dapp_{\mu\nu} = \frac{1}{d_\mu} \sum_{e \in \mathrm{spec}(H)} \tr(\Pi_e P_\mu \Pi_e P_\nu),
	\end{equation}
	where $\Pi_e$ is the projector on the eigenspace of $H$ corresponding to an eigenvalue $e \in \R$, and $P_\mu$ is the projector on $\Hilbert_\mu \subset \Hilbert$. 
Then it holds that
\begin{equation*}
	\dapp_{\mu \nu} = \frac{d_\nu}{N}+\mathcal{O}_\prec(C(\lambda)/\sqrt{N}) \,. 
\end{equation*}
\end{thm}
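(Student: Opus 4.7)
The plan is to combine the spectral decomposition of $H$ with the eigenstate thermalization hypothesis (ETH) for Wigner type matrices established in~\cite{ER24}. Choose an orthonormal eigenbasis $\{u_i\}_{i=1}^N$ of $H$ (with a basis picked inside each eigenspace in case of degeneracies), so that $\Pi_e = \sum_{i:e_i=e}|u_i\rangle\langle u_i|$. Cyclicity of the trace then gives
\begin{equation*}
\dapp_{\mu\nu} = \frac{1}{d_\mu}\sum_{i=1}^N \langle u_i|P_\mu|u_i\rangle \langle u_i|P_\nu|u_i\rangle,
\end{equation*}
reducing the statement to controlling a sum of products of diagonal eigenvector overlaps.

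First I would identify the deterministic approximation of each diagonal overlap. Since the variance matrix $S$ in Assumption~\ref{ass:2} is doubly stochastic, the associated Dyson equation admits the constant solution $m_i(z) \equiv m_{\smc}(z)$, the Stieltjes transform of Wigner's semicircle density. With this in hand, the ETH from~\cite{ER24} (a direct consequence of the two-resolvent local law proved there) produces the pointwise estimate
\begin{equation*}
\langle u_i|P_\mu|u_i\rangle = \frac{d_\mu}{N} + X_i^\mu,\qquad \max_{i\in[N]}|X_i^\mu| \prec \frac{C(\lambda)}{\sqrt{N}},
\end{equation*}
and analogously for $\nu$.

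Substituting and expanding the product yields
\begin{equation*}
\dapp_{\mu\nu} - \frac{d_\nu}{N} = \frac{d_\nu}{N d_\mu}\sum_i X_i^\mu + \frac{1}{N}\sum_i X_i^\nu + \frac{1}{d_\mu}\sum_i X_i^\mu X_i^\nu.
\end{equation*}
The crucial observation is the \emph{exact} deterministic cancellation $\sum_i X_i^\mu = \tr(P_\mu) - N\cdot (d_\mu/N) = 0$, and likewise for $\nu$, which follows for free from the completeness of the eigenbasis and kills both linear sums. The remaining bilinear term is bounded by Cauchy--Schwarz and the pointwise ETH, giving $\big|\sum_i X_i^\mu X_i^\nu\big| \prec N\cdot (C(\lambda)/\sqrt{N})^2 = C(\lambda)^2$; after division by $d_\mu \sim N$ this even yields an error of order $C(\lambda)^2/N$, consistent with (and slightly stronger than) the claimed $C(\lambda)/\sqrt{N}$.

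The main obstacle I anticipate is extracting the pointwise ETH in the precise form above from the local laws in~\cite{ER24}, which are stated as concentration bounds for traces of resolvent products $\langle G(z_1) A G(z_2) B\rangle$. Translating these into control of individual diagonal overlaps $\langle u_i|A|u_i\rangle$ uniformly in $i$ requires an integral representation of the spectral projections at small imaginary spectral parameter, an application of the two-resolvent bound with $A=B=P_\mu$, and a stochastic union bound over the spectrum. A secondary difficulty is tracking the $\lambda$-dependence: the large variance of the $S_{11}$ block degrades the quantitative Dyson-equation estimates as $\lambda\to 0$, but since $\lambda$ is $N$-independent this only inflates the constant $C(\lambda)$ without affecting the rate in $N$.
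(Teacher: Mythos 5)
Your high-level strategy (spectral decomposition followed by ETH) coincides with the paper's, and the exact cancellation $\sum_i X_i^\mu = 0$ you isolate is a genuine refinement -- exploiting it on both factors yields the stronger rate $\mathcal{O}_\prec(C(\lambda)/N)$, whereas the paper expands only one of the two overlaps and quotes $\mathcal{O}_\prec(C(\lambda)/\sqrt{N})$. However, your opening reduction is incorrect as stated: cyclicity of the trace does \emph{not} produce a purely diagonal sum. Writing $\Pi_e = \sum_{i\in I_e}|u_i\rangle\langle u_i|$ with $I_e := \{ i : e_i = e\}$, one has
\begin{equation*}
\tr(\Pi_e P_\mu \Pi_e P_\nu) = \sum_{j,k\in I_e}\langle u_j|P_\mu|u_k\rangle\,\langle u_k|P_\nu|u_j\rangle,
\end{equation*}
which collapses to $\sum_{i\in I_e}\langle u_i|P_\mu|u_i\rangle\langle u_i|P_\nu|u_i\rangle$ only when every $|I_e|=1$. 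Assumption~\ref{ass:2} imposes no continuity on the entry distributions, so simplicity of the spectrum is not free, and in any case would have to be established at the $1-N^{-\gamma}$ probability scale that $\prec$ demands. The paper retains the full double sum over $j,k\in I_e$ and invokes eigenvalue rigidity \cite[Corollary 1.11]{AEK17} to get $|I_e|\le N^\varepsilon$ with very high probability.

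This gap is fixable without losing your improvement: the ETH of \cite[Theorem~2.3]{ER24} also bounds the off-diagonal overlaps $\langle u_j|P_\mu|u_k\rangle$ for $j\ne k$ within a common eigenspace by $\mathcal{O}_\prec(C(\lambda)/\sqrt{N})$, the $\delta_{jk}$ contributions still cancel exactly via $\tr P_\mu = d_\mu$ and $\tr P_\nu = d_\nu$, and the number of surviving cross terms is $\sum_e |I_e|^2 = \mathcal{O}(N^{1+\varepsilon})$, so the bilinear remainder stays $\mathcal{O}_\prec(C(\lambda)/N)$. Separately, the ``main obstacle'' you anticipate -- rederiving a pointwise ETH from the two-resolvent local law via a small imaginary part and a union bound -- does not arise: \cite[Theorem~2.3]{ER24} already gives ETH in precisely the uniform form needed, so the only verification required is that the vector Dyson equation for $S$ from Assumption~\ref{ass:2} admits the constant semicircle solution, which you correctly identified.
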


\begin{thm}[Dynamical typicality: The $w_{\mu \nu}$'s]\label{thm: wmunu}
	Let $H$ be a random matrix satisfying Assumption \ref{ass:2} and take $\mu, \nu \in \{1,2\}$. Denote 
	\begin{equation*}
w_{\mu\nu}(t) = \frac{1}{d_\mu} \tr\left[P_\mu \exp(\ii tH)P_\nu \exp(-\ii t H)\right]
	\end{equation*}
for $t \ge 0$, where $P_\mu$ is the projector on $\Hilbert_\mu \subset \Hilbert$. Then, it holds that 
\begin{equation}  \label{eq:wmunus}
	w_{12}(t) = \frac{d_2}{N} - \frac{1}{\pi (\lambda t)^3}\left(1+o(1)\right) + \mathcal{O}_\prec\left(C(\lambda, t)/N \right),
\end{equation}
where $o(1)$ denotes a quantity vanishing in the limit $t \to \infty$, $\lambda  \to 0$, and $t \lambda \to \infty$, and $C(\lambda, t)$ is a constant depending only on its arguments. 
\end{thm}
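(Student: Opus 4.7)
The plan is to express the Heisenberg time evolution as a double contour integral. With $G(z) = (H-z)^{-1}$ and contours $\gamma_1, \gamma_2$ enclosing $\mathrm{spec}(H)$, one has
\begin{equation*}
w_{12}(t) = \frac{1}{d_1 (2\pi \ii)^2} \oint_{\gamma_1}\oint_{\gamma_2} \ee^{\ii t (z_1-z_2)}\, \tr\bigl[P_1 G(z_1) P_2 G(z_2)\bigr]\, \dd z_1\, \dd z_2 .
\end{equation*}
First I would invoke the two-resolvent local law for Wigner type matrices from \cite{ER24} to replace the random trace by its deterministic approximation $\tr[P_1 M(z_1,z_2)]$ up to a fluctuation error. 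Choosing contours at distance $\eta \sim 1/t$ from the spectrum and tracking the oscillatory gain, the integrated fluctuation produces the claimed remainder $\mathcal{O}_\prec(C(\lambda,t)/N)$.

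The second step is to compute $M(z_1,z_2)$ explicitly. Since $S$ is doubly stochastic, the single-resolvent MDE solution is scalar, $m(z) \mathbf{1}$, with $m$ the semicircular Stieltjes transform obeying $m^2+zm+1=0$. Because $S$ and $P_2$ are block-constant, $M$ is block-diagonal, $M = a_1 \mathbf{1}_{d_1} \oplus a_2 \mathbf{1}_{d_2}$, with scalars solving a $2\times 2$ linear system arising from the vector Dyson equation $M - m(z_1)\,S[M]\,m(z_2) = m(z_1)\,P_2\,m(z_2)$. Setting $u := m(z_1) m(z_2)$ and using $d_1/(N\lambda) = (1+\lambda)^{-1}$ etc., elementary algebra yields
\begin{equation*}
\frac{1}{d_1}\tr\bigl[P_1 M(z_1,z_2)\bigr] \;=\; \frac{\lambda\, u^2}{(1+\lambda)\,(u-1)\bigl((1-\lambda) u - 1\bigr)} .
\end{equation*}
The problem thus reduces to the double contour integral of this rational function in $u$ against the oscillatory factor $\ee^{\ii t (z_1-z_2)}$.

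Third, I would deform both contours to the spectrum $[-2,2]$ and parametrise by $x_j = -2\cos\phi_j$ with $\phi_j \in (0,\pi)$, so that the boundary values $m(x_j \pm \ii 0) = \ee^{\pm \ii \phi_j}$ lie on the unit circle. The pole $u = 1$ of the integrand lies precisely on the diagonal $\phi_1 = \phi_2$, and its residue contribution produces the time-independent term $d_2/N = (1+\lambda)^{-1}$, consistently with Theorem~\ref{thm: dappmunu}. The other pole $u = (1-\lambda)^{-1} > 1$ lies outside the unit circle but approaches it within distance $O(\lambda)$ precisely at the four edge points $(\phi_1, \phi_2) \in \{0,\pi\}^2$. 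A careful saddle point / stationary phase analysis, jointly in $\lambda$ and $t$, around these edges extracts from this near-pole the time-dependent correction $-1/(\pi(\lambda t)^3)(1+o(1))$ in the regime $t \to \infty$, $\lambda \to 0$, $t\lambda \to \infty$.

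The principal technical hurdle is exactly this joint saddle point analysis with a near-pole singularity. A standard stationary phase at a semicircle edge produces $t^{-3/2}$ asymptotics; the upgrade to the $(\lambda t)^{-3}$ scaling arises from the interplay between the square-root edge vanishing of the semicircle density and the $O(\lambda)$-proximity of the pole $u = (1-\lambda)^{-1}$ to the unit circle. Isolating the leading term will require restricting to a near-edge window (say via a smooth cutoff at scale $\sqrt{\lambda}$), expanding both the phase $\cos\phi_1 - \cos\phi_2$ and the rational factor to sufficient order in the edge coordinate, and evaluating the resulting Airy/Fresnel-type oscillatory integral explicitly, while showing the complementary bulk contribution is of lower order in $(\lambda t)^{-1}$. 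This is the delicate singular stationary phase analysis flagged in the introduction, and it relies decisively on the explicit rational form of $M$ made available by the doubly stochastic block structure of $S$.
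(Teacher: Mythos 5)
Your Steps 1--3 coincide with the paper's proof: the double contour integral representation, the application of the two-resolvent global law from \cite{ER24}, and the explicit rational formula for $\tfrac{1}{d_1}\tr[P_1 M(z_1,P_2,z_2)]$ in terms of $\mathfrak{m}=m(z_1)m(z_2)$ are all exactly what the paper does (its Lemma~\ref{lem: inverse stab op} and the computation following it). The paper also first reduces to $w_{12}$ via Lemma~\ref{lem: rel wmunu}, which you do not mention but which is a triviality.

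Step 4 is where you depart from the paper, and this is where the proposal is incomplete and imprecise. The paper does \emph{not} do a stationary phase directly on the spectral contours. Instead it expands the rational function in a geometric series in $\mathfrak{m}$ (valid because the contours are kept at distance $\sim 1/t$ from $[-2,2]$, so $|\mathfrak{m}|<1$), which decouples the two contour integrals into products of Bessel functions (Lemma~\ref{lem: contour int}), giving a weighted Bessel series $T_2=\sum_n r^{n+1}\bigl(\tfrac{(n+2)J_{n+2}(2t)}{t}\bigr)^2$ with $r=1-\lambda$. It then resums this series via the integral representation of $J_{n+2}^2$, yielding a $2$-d oscillatory integral whose amplitude $g$ blows up like $(1-r)^{-3}$ precisely at two of the four stationary points; extracting $(1-r)^{-3}t^{-3}$ from those two points is the delicate step. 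Your proposed direct deformation of both contours to $[-2,2]$ is a plausible alternative route, but it has a problem the paper's route avoids: the pole $\mathfrak{m}=1$ then sits \emph{on} the contour along the diagonal $\phi_1=\phi_2$ (on the ``cross'' segments $z_1=x_1+\ii 0,\ z_2=x_2-\ii 0$), and extracting $d_2/N$ as a residue from a pole on the contour needs a careful limiting argument that you do not supply; the paper's off-contour choice and geometric series avoid this entirely.

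Moreover, your description of the near-pole geometry is not accurate. The pole $u=(1-\lambda)^{-1}$ is at distance $\lambda/(1-\lambda)$ from the unit circle everywhere; what varies with $(\phi_1,\phi_2)$ is whether the value $\mathfrak{m}$ is close to it, which happens along the whole locus $\arg\mathfrak{m}\approx 0$, not ``precisely at the four edge points.'' Among the four phase-stationary corners, only the two with $\mathfrak{m}\to +1$ see the near-pole; at the other two $\mathfrak{m}\to -1$ and the amplitude is $O(1)$. The resulting singular stationary phase -- the interplay of a $(1-\lambda)^{-3}$ amplitude with the $t\to\infty$ oscillation and the $O(\lambda)$-wide near-diagonal region -- is the entire content of the lemma your outline defers to, and it is the part of the proof you have not actually done. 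In particular, your claim that the answer scales as $(\lambda t)^{-3}$ is unverified by your argument; the paper establishes it through the explicit computation culminating in its equation \eqref{eq:T2} and the requirement $t(1-r)\to\infty$, which is exactly the condition $t\lambda\to\infty$ appearing in the statement. In short: the setup is right, the identification of the hard step is right, but the hard step itself is not carried out and your sketch of it contains a genuine geometric confusion.
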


The other $w_{\mu \nu}$'s can easily be obtained from $w_{12}$ using trivial symmetries and summation rules (cf.~Lemma \ref{lem: rel wmunu}), namely $\sum_\nu w_{\mu \nu}(t) = 1$ and $w_{\mu \nu}(t) = \tfrac{d_\nu}{d_\mu}w_{\nu \mu}(-t)$ together with the fact that $- H$ satisfies Assumption \ref{ass:2} as well. 

\begin{rmk}[Explicit form of the $w_{\mu \nu}$'s]  \label{rmk:precise}
Our proof of Theorem \ref{thm: wmunu} actually yields explicit expressions of the
$w_{\mu \nu}$'s in \eqref{eq:wmunus} up to an error term of order $1/N$. For example, for $w_{12}$, we have that
\begin{equation} \label{eq:w12precise}
w_{12}(t) = \frac{d_2}{N} \left( 1 - \sum_{n \ge 0} (1-\lambda)^n (n+1)^2 \left(\frac{J_{n+1}(2t)}{t}\right)^2 \right) + \mathcal{O}_\prec \left(C(\lambda, t)/N\right) \,,
\end{equation}
where $J_{n}$ is the $n$-th order Bessel function of the first kind.  The formulas for the other $w_{\mu \nu}$'s are similar. During the proof, we evaluate the sum in \eqref{eq:w12precise} asymptotically to obtain \eqref{eq:wmunus}. 
\end{rmk}

\begin{thm}[Approach to equilibrium]\label{thm: Pnupsi}
		Let $H$ be a random matrix satisfying Assumption~\ref{ass:2} and take $\mu, \nu \in \{1,2\}$.
Let  $\psi_0\in\mathbb{S}(\Hilbert_\mu)$ and denote $\psi_t := \ee^{- \ii t H} \psi_0$ for $t \ge 0$. Then, denoting the projector on the Hilbert space $\Hilbert_\nu \subset \Hilbert$ by $P_\nu$, it holds that 
\begin{subequations} \label{eq:Pnupsis}
\begin{align}
	\Vert P_2 \psi_t \Vert^2 = \frac{d_2}{N} - \frac{1}{\pi(\lambda t)^3} \left(1+o(1)\right) +  \mathcal{O}_\prec\left(C(\lambda, t)/\sqrt{N} \right) \quad \text{for} \quad \psi_0 \in \mathbb{S}(\Hilbert_1),
\end{align}
and 
\begin{align}
	\Vert P_1 \psi_t \Vert^2  = \frac{d_1}{N}- \frac{1}{\pi\lambda^2 t^3}\left(1+o(1)\right) + \mathcal{O}_\prec\left(C(\lambda, t)/\sqrt{N}\right) \quad \text{for} \quad \psi_0 \in \mathbb{S}(\Hilbert_2)\,. 
\end{align}
\end{subequations}
 Here, $o(1)$ denotes a quantity vanishing in the limit $t \to \infty$, $\lambda  \to 0$, and $t \lambda \to \infty$, and $C(\lambda, t)$ is a constant depending only on its arguments. 
\end{thm}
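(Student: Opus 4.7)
The plan is to derive Theorem~\ref{thm: Pnupsi} from Theorem~\ref{thm: wmunu} by a concentration-of-measure argument on the unit sphere $\mathbb{S}(\Hilbert_\mu)$: since $w_{\mu\nu}(t) = \mathbb{E}_\mu \Vert P_\nu\psi_t\Vert^2$ already realizes the claimed leading and subleading terms up to an additive error of order $\mathcal{O}_\prec(C(\lambda,t)/N)$, it suffices to show that $\Vert P_\nu\psi_t\Vert^2$ concentrates around its $\psi_0$-mean $w_{\mu\nu}(t)$ with fluctuations at most $\mathcal{O}_\prec(C(\lambda)/\sqrt{N})$. The total error is then the larger of the two, which is $\mathcal{O}_\prec(C(\lambda,t)/\sqrt{N})$ as advertised.

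First, using $P_\mu \psi_0 = \psi_0$, rewrite
\begin{equation*}
\Vert P_\nu \psi_t\Vert^2 = \langle \psi_0 | B(t) | \psi_0 \rangle, \qquad B(t) := P_\mu \ee^{\ii t H} P_\nu \ee^{-\ii t H} P_\mu,
\end{equation*}
so that, conditional on $H$, $B(t)$ is a positive semi-definite operator supported on $\Hilbert_\mu$ with operator norm $\Vert B(t)\Vert_{\mathrm{op}} \le 1$. A direct computation confirms $\mathbb{E}_\mu\langle\psi_0|B(t)|\psi_0\rangle = \tr(B(t))/d_\mu = w_{\mu\nu}(t)$.

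Next, conditional on $H$, the quadratic form $\psi \mapsto \langle\psi|B(t)|\psi\rangle$ on $\mathbb{S}(\Hilbert_\mu)$ is $2\Vert B(t)\Vert_{\mathrm{op}} \le 2$ Lipschitz, so Levy's lemma on $\mathbb{S}(\C^{d_\mu})$ gives the subgaussian tail
\begin{equation*}
\mathbb{P}_\mu\bigl(\bigl|\langle\psi_0|B(t)|\psi_0\rangle - w_{\mu\nu}(t)\bigr| > \varepsilon \bigm| H\bigr) \le 2\exp(-c\,d_\mu\,\varepsilon^2)
\end{equation*}
for a universal constant $c>0$. Since $d_\mu \ge \lambda D \gtrsim \lambda N$ in the worst case $\mu = 1$, this pointwise-in-$H$ bound lifts to $\Vert P_\nu\psi_t\Vert^2 - w_{\mu\nu}(t) = \mathcal{O}_\prec(C(\lambda)/\sqrt{N})$ after a standard Markov argument on sufficiently high $\psi_0$-moments, independent of the randomness of $H$. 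Finally, substituting the explicit asymptotic form of $w_{\mu\nu}(t)$ from Theorem~\ref{thm: wmunu}, together with the symmetries of Lemma~\ref{lem: rel wmunu} and the $H\mapsto -H$ invariance of Assumption~\ref{ass:2} to recover the $\psi_0 \in \mathbb{S}(\Hilbert_2)$ case, yields \eqref{eq:Pnupsis}.

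The only substantive input beyond Theorem~\ref{thm: wmunu} is the sphere-concentration step, which is essentially classical: because $\Vert B(t)\Vert_{\mathrm{op}} \leq 1$ holds deterministically and the Levy bound is uniform in $t$, no new estimates on multi-resolvent products or on higher moments of $\tr(B(t)^k)$ are needed. The only mild subtlety is to combine the two independent sources of randomness—that of the Wigner-type matrix $H$ and of the uniform measure on $\mathbb{S}(\Hilbert_\mu)$—consistently into the stochastic-domination framework, which is handled routinely by first conditioning on $H$ and then applying the union bound with Theorem~\ref{thm: wmunu}.
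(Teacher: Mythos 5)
There is a fundamental gap: the theorem is stated for a \emph{fixed, arbitrary} $\psi_0\in\mathbb{S}(\Hilbert_\mu)$, with the stochastic domination $\mathcal{O}_\prec(\cdot)$ referring to the randomness of $H$ alone (this is made explicit in the introduction: ``we also study the quantities $\|P_\nu\psi_t\|^2$ for a fixed $\psi_0 \in \mathbb{S}(\mathcal{H}_\mu)$''). Your argument, however, uses Levy's lemma to obtain concentration of the quadratic form $\psi_0\mapsto\langle\psi_0|B(t)|\psi_0\rangle$ \emph{over the uniform measure on the sphere}. That proves a different and strictly weaker statement: for most $\psi_0$ (drawn uniformly) and most $H$, the asymptotics hold. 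It does not yield, for each fixed $\psi_0$, an $H$-probability bound. Indeed, there is no logical route from ``small exceptional set of $(\psi_0,H)$ pairs'' to ``for every $\psi_0$, small exceptional set of $H$''; a tiny fraction of ``bad'' $\psi_0$'s could fail for all $H$ and Levy's lemma would not see it. Your closing sentence about combining the two randomness sources ``routinely by first conditioning on $H$'' glosses over precisely this point.

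The paper's actual proof avoids the issue by working with an \emph{isotropic} two-resolvent global law,
\begin{equation*}
\langle \psi_0| G(z) P_\nu G(\tilde{z})|\psi_0 \rangle = \langle \psi_0 | M(z, P_\nu, \tilde{z}) | \psi_0 \rangle + \mathcal{O}_\prec\bigl(C(\lambda, t)/\sqrt{N}\bigr),
\end{equation*}
valid for arbitrary deterministic unit vectors, rather than the \emph{averaged} law underlying Theorem~\ref{thm: wmunu}. Because $M(z,P_\nu,\tilde z)$ is block-constant and $\psi_0$ is normalized and supported in $\Hilbert_\mu$, the quadratic form $\langle\psi_0|M(z,P_\nu,\tilde z)|\psi_0\rangle$ equals the same block-scalar that appears in $w_{\mu\nu}(t)$, and the remaining contour-integral computation goes through verbatim. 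The isotropic law supplies the $1/\sqrt{N}$ error directly and uniformly over deterministic $\psi_0$, which is exactly what the ``fixed $\psi_0$'' phrasing requires and what Levy's lemma cannot give. (Your reduction of $\|P_\nu\psi_t\|^2$ to $\langle\psi_0|B(t)|\psi_0\rangle$ with $\|B(t)\|_{\mathrm{op}}\le 1$ and the verification $\mathbb{E}_\mu\langle\psi_0|B(t)|\psi_0\rangle=w_{\mu\nu}(t)$ are fine as far as they go; the problem is solely the sphere-concentration step.)
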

The analogs of the expressions in \eqref{eq:Pnupsis} with $P_1$ instead of $P_2$ and $P_2$ instead of $P_1$, respectively, can easily be obtained via the sum rule $\sum_{\nu} \Vert P_\nu \psi_t \Vert^2 = 1$. {Moreover, during the proof of Theorem \ref{thm: Pnupsi} we obtain similar closed expressions for the $\Vert P_\nu \psi_t\Vert^2$'s as in Remark \ref{rmk:precise}.}

\subsubsection{Discussion} 
Theorem~\ref{thm: dappmunu} shows normal typicality, i.e. 
that $\dapp_{\mu\nu} \approx d_\nu/N$ with very high probability.
Both Theorem~\ref{thm: wmunu} and~\ref{thm: Pnupsi} are statements about dynamical typicality,
the latter holds for any fixed initial data $\psi_0$ with an error of order $O(1/\sqrt{N})$,
 while in the former the initial state
is  averaged out and a more precise $O(1/N)$ error term is obtained.
As the initial state is purely in one subspace, for times smaller than the equilibration time, its part in the other subspace is expected to be smaller than its equilibrium value. The minus signs in 
Theorem~\ref{thm: Pnupsi} show exactly this: the equilibrium value of $\|P_\nu\psi_t\|^2$ with $\psi_0\in\mathbb{S}(\Hilbert_\mu)$, $\mu\neq \nu$, is approached from below, see Figure~\ref{fig:2blocks} for a numerical simulation.

\begin{figure}[h]\label{fig:2blocks}
	\centering
	\includegraphics[height=9cm]{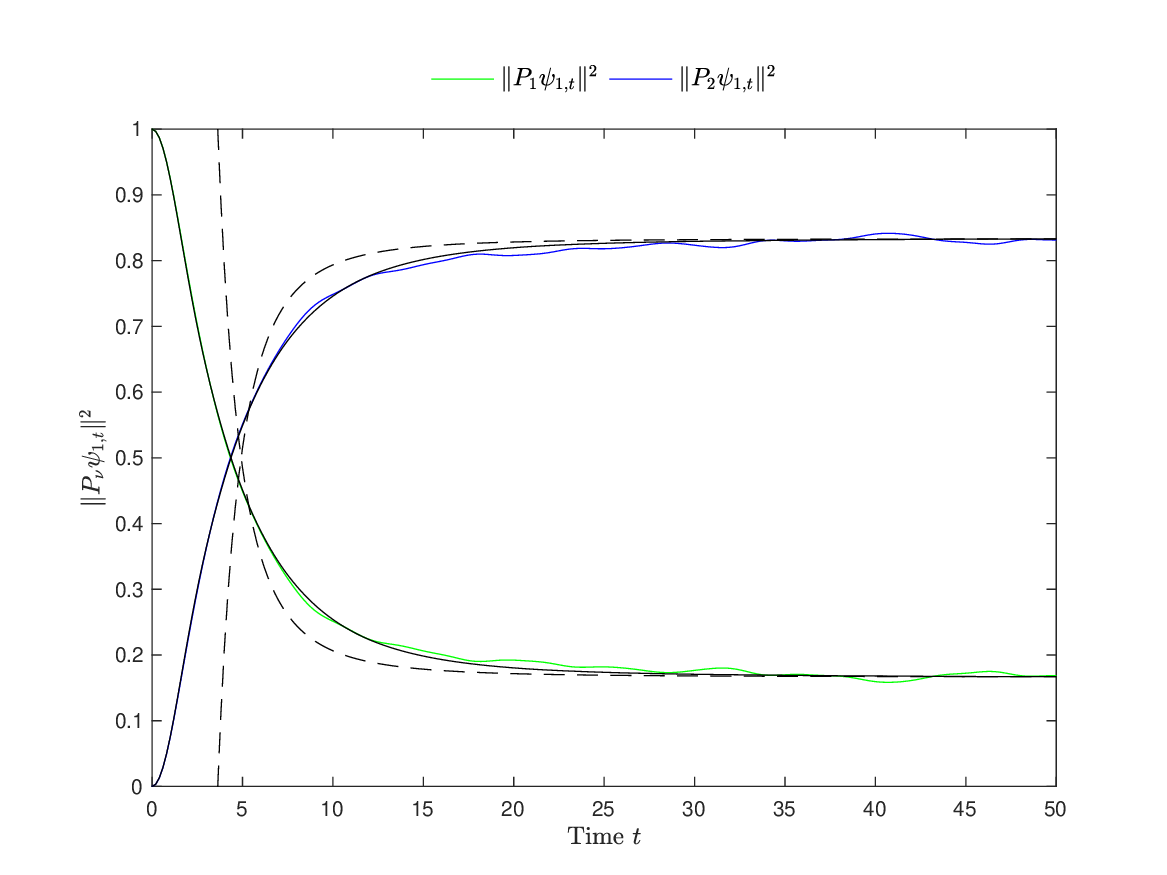}
	\caption{Numerical simulation of the functions $t\mapsto \|P_\nu\psi_t\|^2$ for a random $2\times 2$ block matrix as in Assumption~\ref{ass:2}. Here, $\lambda=0.2$ and the Hilbert space $\Hilbert$ of dimension $N=4200$ is decomposed into 2 macro spaces of dimensions $d_1 = 700$ (green curve) and $d_2=3500$ (blue curve). The initial state $\psi_0\in\mathbb{S}(\Hilbert_1)$ was chosen purely randomly. The black solid curves are the deterministic {(w.r.t.~the randomly chosen $\psi_0$)} approximations $w_{1\nu}(t)$ from \eqref{eq:wmunus} and the black dashed curves are the approximations of $\|P_\nu\psi_t\|^2$ according to Theorem~\ref{thm: Pnupsi}. Note that the dashed curves start from $t\geq 1/(2\lambda) = 2.5$ as the approximations in Theorem~\ref{thm: Pnupsi} are only meaningful for $t \gg 1/\lambda$.}
\end{figure}

Since the equilibration takes place on time scales $t\gg \lambda^{-1}$, we see that by choosing the parameter $\lambda$ in a suitable way, our model also covers the case of equilibration times that are physically more realistic than the ones observed, e.g., in \cite{SF12,GHT13,GHT14,GHT15}.

As mentioned below Theorem~\ref{thm: wmunu}, all other $w_{\mu\nu}$'s can be obtained from $w_{12}$ with the help of Lemma~\ref{lem: rel wmunu}. Another reason why we did not write down all four formulas is that they could be interpreted to carry misleading information. For example, the formula $w_{21}(t)=\lambda w_{12}(-t)$ might suggest that equilibration takes place for times $t\gg \lambda^{-2/3}$. However, relative to the equilibrium value which is of order $\lambda$ for $w_{21}$ (instead of order one for $w_{12}$), we again need that $t\gg \lambda^{-1}$ to ensure that the {time dependent error} term is small compared to the equilibrium value. Moreover, the speed of convergence of the $w_{\mu\mu}$ is dictated by the sum rule $w_{12}+w_{11}=1$ (and similarly for $w_{22}$).

Finally, we remark that we do not observe the phenomenon of recurrence in our model. The reason is that the recurrence time is exponentially large in $N$ and in the theories we used to obtain our results, we have to take the limit $N\to\infty$ first such that the considered time $t$ is never of the order of a recurrence time.

\subsection{Three macro spaces}  \label{subsec:main3}
Let $\lambda>0$ be a small ($N$-independent) parameter, let $D\geq 1$ and let $N = (1+\lambda + \lambda^2)D$ be the dimension of the whole Hilbert space $\Hilbert$.\footnote{As in Section \ref{subsec:main2}, to avoid technicalities, we shall henceforth suppose that $N$ is in fact an integer.} We partition $\Hilbert$ into three macro spaces of dimension $d_1=\lambda^2 D$, $d_2 = \lambda D$, and $d_3 = D$, i.e., 
\begin{equation*}
	\Hilbert = \mathbb{C}^{N} \simeq \mathbb{C}^{d_1}\oplus\mathbb{C}^{d_2} \oplus \C^{d_3} =: \Hilbert_1 \oplus \Hilbert_2 \oplus \Hilbert_3\,. 
\end{equation*}
We model the Hamiltonian by a Hermitian random matrix $H=(h_{ij})_{i,j \in [N]}$ whose entries $h_{ij} = \overline{h_{ji}}$ satisfy the following conditions.

\begin{assumption}[$3 \times 3$ block matrix model] \label{ass:3}
For $i \le j$, the entries $h_{ij}$ are centered and independent random variables. The variance matrix $S := (\mathbb{E} |h_{ij}|^2)_{i,j}$ is given by 
\begin{align*}
	S=\left(\begin{matrix}
		S_{11} & S_{12} & 0\\
		S_{21} & S_{22} & S_{23}\\
		0 & S_{32} & S_{33}\\
	\end{matrix}\right),
\end{align*}
where
\footnotesize
\begin{alignat*}{2}
	S_{11} &= \frac{1}{(1+\lambda+\lambda^2) D} \frac{1}{\lambda^2} E_{d_1,d_1} \qquad 
	&&S_{12} = S_{21}^* = \frac{1+\lambda}{(1+\lambda+\lambda^2)D} E_{d_1,d_2}\\
	S_{22} &= \frac{1+\lambda-\lambda^3-\delta}{\lambda (1+\lambda+\lambda^2)D} E_{d_2,d_2}, \qquad 
	&&S_{23} = S_{32}^* = \frac{\delta}{(1+\lambda+\lambda^2)D} E_{d_2,d_3}\\
	& \qquad \qquad \qquad \qquad S_{33}= \frac{1+\lambda(1-\delta)+\lambda^2}{(1+\lambda+\lambda^2)D} && E_{d_3,d_3}
\end{alignat*}
\normalsize
and the zero blocks are of the appropriate size. Moreover, we assume that all centered moments of $\sqrt{N} H$ are uniformly bounded in $N$, i.e., for all $p\in\mathbb{N}$ there exists a ($\lambda$- and $\delta$-dependent) constant $C_p(\lambda, \delta)>0$ such that, uniformly in $1\leq i,j\leq N$,
\begin{align*}
	\mathbb{E}|\sqrt{N} h_{ij}|^p \leq C_p(\lambda, \delta). 
\end{align*} 
\end{assumption}

 We point out that, similarly to Assumption \ref{ass:2}, the variances in Assumption \ref{ass:3} are again chosen in such a way that the rows and columns of $S$ sum up to one. As in the $2\times 2$ case this choice ensures that the solution of the corresponding quadratic vector equation \cite{AEK19} is given by the semicirclular law. Again, we believe that the results obtained for this specific model remain true for more general matrices which have the same block structure as in Assumption~\ref{ass:3} with the sum of the variances in any secondary diagonal block being much smaller than in any diagonal block but not fine-tuned to ensure that $S$ is doubly stochastic. 
 
 In order to {further} simplify the {otherwise extremely tedious} computations {in Section \ref{sec:3proof}}, we choose $\delta=\lambda$ in the following. 

\subsubsection{Results for three macro spaces} For the $3\times 3$ block model defined above we have the following results. Their proofs are given in Section \ref{sec:3proof}. 
\begin{thm}[Normal typicality: $\dapp_{\mu \nu} \approx d_\nu/N$] \label{thm: dappmunu3} Let $H$ be a random matrix satisfying Assumption \ref{ass:3} with $\delta = \lambda$ and take $\mu, \nu \in \{1,2,3\}$. Denote 	\begin{equation} \label{eq:Mmunudef3}
		\dapp_{\mu\nu} = \frac{1}{d_\mu} \sum_{e \in \mathrm{spec}(H)} \tr(\Pi_e P_\mu \Pi_e P_\nu),
	\end{equation}
	where $\Pi_e$ is the projector on the eigenspace of $H$ corresponding to an eigenvalue $e \in \R$, and $P_\mu$ is the projector on $\Hilbert_\mu \subset \Hilbert$. Then it holds that
\begin{equation*}
	\dapp_{\mu \nu} = \frac{d_\nu}{N} + \mathcal{O}_\prec(C(\lambda)/\sqrt{N}) \,. 
\end{equation*}
\end{thm}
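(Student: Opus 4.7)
The plan is to mimic the proof of the analogous Theorem~\ref{thm: dappmunu} in the $2 \times 2$ case. Since Wigner type random matrices almost surely have simple spectrum, each eigenprojection factorizes as $\Pi_e = |u_e\rangle\langle u_e|$, yielding $\tr(\Pi_e P_\mu \Pi_e P_\nu) = \langle u_e | P_\mu | u_e\rangle \langle u_e | P_\nu | u_e\rangle$, and hence
\begin{equation*}
    \dapp_{\mu\nu} = \frac{1}{d_\mu} \sum_{e \in \mathrm{spec}(H)} \langle u_e | P_\mu | u_e\rangle \langle u_e | P_\nu | u_e\rangle.
\end{equation*}

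Before invoking the heavy machinery, I would verify that the variance matrix $S$ from Assumption~\ref{ass:3} (with $\delta = \lambda$) is doubly stochastic, namely that all of its row and column sums equal one. This is a short arithmetic check that, exactly as in the $2\times 2$ case, places us in the regime where the solution of the associated matrix Dyson equation is the constant scalar Stieltjes transform of the semicircle density, $M(z) = m_{\mathrm{sc}}(z)\mathbf{1}$. With this in place, the Eigenstate Thermalization Hypothesis for Wigner type matrices from \cite{ER24} can be applied to the deterministic, traceless observables $A_\mu := P_\mu - (d_\mu/N)\mathbf{1}$ and $A_\nu := P_\nu - (d_\nu/N)\mathbf{1}$, yielding the pointwise bound
\begin{equation*}
    \max_{e \in \mathrm{spec}(H)} \left| \langle u_e | P_\mu | u_e\rangle - \frac{d_\mu}{N} \right| \prec \frac{C(\lambda)}{\sqrt{N}},
\end{equation*}
and similarly with $\mu$ replaced by $\nu$.

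Writing $f_\mu(e) := \langle u_e | P_\mu | u_e\rangle - d_\mu/N$, the sum rule $\sum_e \langle u_e | P_\mu | u_e\rangle = \tr P_\mu = d_\mu$ forces $\sum_e f_\mu(e) = 0$. Expanding the product, one obtains
\begin{equation*}
    \sum_e \langle u_e | P_\mu | u_e\rangle \langle u_e | P_\nu | u_e\rangle = \frac{d_\mu d_\nu}{N} + \sum_e f_\mu(e) f_\nu(e),
\end{equation*}
and a Cauchy-Schwarz estimate together with the pointwise ETH bound controls the remainder by $C(\lambda)$. Dividing by $d_\mu$ and using $d_\mu \gtrsim \lambda^2 N$ to absorb the $\lambda^{-2}$ factor into $C(\lambda)$ produces the claimed approximation $\dapp_{\mu\nu} = d_\nu/N + \mathcal{O}_\prec(C(\lambda)/\sqrt{N})$.

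The main obstacle is quantitative: one must verify that the ETH result of \cite{ER24} genuinely applies to Assumption~\ref{ass:3}, whose variance matrix contains macroscopic zero blocks $S_{13} = S_{31} = 0$ and whose non-vanishing entries span several orders in the small parameter $\lambda$. The relevant regularity and flatness hypotheses of \cite{ER24} need to be checked modulo the zero blocks (covered by the footnote~\ref{ftn:WT}), and the $\lambda$-dependence of all implicit constants has to be tracked carefully so that they can be absorbed into the single $\lambda$-dependent prefactor $C(\lambda)$. This is the same kind of delicate book-keeping that underlies the proof of Theorem~\ref{thm: dappmunu}; since the $3 \times 3$ block pattern is structurally analogous to the $2 \times 2$ one, with one extra macroscopic scale, the same strategy should go through without essential new ideas.
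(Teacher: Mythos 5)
Your overall strategy matches the paper's: reduce $\dapp_{\mu\nu}$ to eigenvector overlaps, verify that the vector Dyson equation has the semicircular solution, and then apply the ETH for Wigner-type matrices from \cite{ER24}. The sum-rule/Cauchy--Schwarz step is a nice touch that the paper does not use --- the paper estimates the off-diagonal error terms with a cruder triangle inequality combined with eigenvalue rigidity ($|I_e|\le N^\varepsilon$), which only yields $\mathcal{O}_\prec(C(\lambda)/N^{1/2-\varepsilon})$, whereas your cancellation argument would in principle give the even stronger error $\mathcal{O}_\prec(C(\lambda)/N)$.

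That said, there are two points where your argument is incomplete. First, you assert that Wigner-type random matrices ``almost surely have simple spectrum,'' so that $\Pi_e = |u_e\rangle\langle u_e|$. This is not available from Assumption~\ref{ass:3}, which only requires bounded moments of the entries and certainly allows discrete distributions for which almost-sure simplicity is not obvious. The paper avoids this entirely: it keeps the double sum $\sum_{j,k\in I_e}\langle u_j|P_\mu|u_k\rangle\langle u_k|P_\nu|u_j\rangle$ with $I_e := \{j : Hu_j = eu_j\}$ and then uses eigenvalue rigidity to bound $|I_e| \le N^\varepsilon$ with very high probability, which suffices after applying the ETH for the off-diagonal overlaps $\langle u_j | P_\nu | u_k\rangle$ with $j \ne k$. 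You should reorganize the argument in this degeneracy-robust form rather than invoke simple spectrum.

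Second, and more importantly, you correctly flag that one must ``verify that the ETH result of \cite{ER24} genuinely applies'' in the presence of the macroscopic zero blocks $S_{13}=S_{31}=0$, but you defer this rather than settle it. This check is precisely the new ingredient in the $3\times 3$ proof compared to the $2\times 2$ one, and it is not simply subsumed by the footnote. What is actually needed is the \emph{uniform primitivity} of the variance matrix: the existence of an $N$-independent integer $L$ and constant $c(\lambda)>0$ with $(S^L)_{jk} \ge c(\lambda)/N$ for all $j,k$. In the $2\times 2$ model $S$ itself is flat ($L=1$), but for the $3\times 3$ model with $S_{13}=0$ one must verify this with $L=2$, which indeed holds because the middle block couples to both outer blocks. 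Without this explicit verification the proof has a genuine hole, since \cite{ER24} requires primitivity as a hypothesis, and the constant $c(\lambda)$ feeds directly into the $\lambda$-dependent prefactor $C(\lambda)$.
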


\begin{thm}[Dynamical typicality: The $w_{\mu \nu}'s$] \label{thm: wmunu3} Let $H$ be a random matrix satisfying Assumption \ref{ass:3} with $\delta = \lambda$ and take $\mu, \nu \in \{1,2,3\}$.
		Denote 
		\begin{equation*}
			w_{\mu\nu}(t) = \frac{1}{d_\mu} \tr\left[P_\mu \exp(\ii tH)P_\nu \exp(-\ii t H)\right]
		\end{equation*}
		for $t \ge 0$, where $P_\mu$ is the projector on $\Hilbert_\mu \subset \Hilbert$. 		
		Then, it holds that
		\begin{subequations} \label{eq:wmunus3}
\begin{align}
w_{12}(t)&=\frac{d_2}{N} + \frac{3}{\pi(\lambda t)^3}\left(1+o(1)\right)+ \mathcal{O}_\prec(C(\lambda, t)/N), \\ 
w_{13}(t)&=\frac{d_3}{N} - \frac{4}{\pi (\lambda t)^3}\left(1+o(1)\right)+ \mathcal{O}_\prec(C(\lambda, t)/N), \\ 
w_{23}(t)&=\frac{d_3}{N}-\frac{1}{\pi(\lambda t)^3}\left(1+o(1)\right)+ \mathcal{O}_\prec(C(\lambda, t)/N), 	
\end{align}
		\end{subequations}
		where $o(1)$ denotes a quantity vanishing in the limit $t \to \infty$, $\lambda  \to 0$, and $t \lambda \to \infty$, and $C(\lambda, t)$ is a constant depending only on its arguments. 
\end{thm}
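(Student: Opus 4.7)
The strategy is to mirror the proof of Theorem~\ref{thm: wmunu} (the $2\times 2$ case). I would start from the Heisenberg representation and rewrite the propagator as a contour integral: for a positively oriented contour $\gamma$ encircling $\mathrm{spec}(H)$,
\begin{equation*}
e^{\pm\ii tH} = \pm \frac{1}{2\pi\ii}\oint_\gamma e^{\pm\ii tz}\, G(z)\,\dd z,
\end{equation*}
so that
\begin{equation*}
w_{\mu\nu}(t) = -\frac{1}{(2\pi)^2 d_\mu}\oint_{\gamma_1}\oint_{\gamma_2} e^{\ii t(z_1-z_2)}\, \tr\bigl[P_\mu G(z_1)P_\nu G(z_2)\bigr]\,\dd z_2\,\dd z_1,
\end{equation*}
with $\gamma_1,\gamma_2$ suitable contours (for the stochastic domination bound they should be kept at distance $\sim N^{-1+\varepsilon}$ from $[-2,2]$, the limiting semicircle support). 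Inserting the two-resolvent local law for Wigner type matrices from~\cite{ER24} reduces the problem to computing the double contour integral of a deterministic approximation $\mathrm{tr}[P_\mu M_{12}(z_1,P_\nu,z_2)]/d_\mu$, up to a stochastic error that, after integration over contours of length $O(1)$ at appropriate distance from the spectrum, yields the $\mathcal{O}_\prec(C(\lambda,t)/N)$ term.

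The next step is the deterministic algebra. Since the variance matrix $S$ in Assumption~\ref{ass:3} is doubly stochastic, the solution $m(z)$ to the underlying quadratic vector equation is (block-wise constant and equal to) the Stieltjes transform of the semicircular density. The deterministic approximation to $G(z_1)P_\nu G(z_2)$ is then obtained by inverting the linear operator $1-m(z_1)m(z_2)\mathcal{S}$, where $\mathcal{S}$ acts on the $3\times 3$ space of block-constant diagonals with entries given by the row sums $(\lambda^{-2},\lambda^{-1}-\lambda^2-\delta,1)$ rescaled by the subspace weights. Because this operator has an explicit $3\times 3$ matrix representation (and with the choice $\delta=\lambda$ significant cancellations occur), I would first expand the resulting rational function of $m(z_1)m(z_2)$ as a geometric/power series indexed by $n\in\mathbb{N}$, then use the Cauchy/residue integration identity $(2\pi\ii)^{-1}\oint e^{\ii tz}m(z)^n\,\dd z = (n/t)\,J_n(2t)$ to obtain an explicit series representation for each $w_{\mu\nu}(t)$ analogous to Remark~\ref{rmk:precise}, involving sums of Bessel functions weighted by powers of $(1-\lambda)$ or $(1-\lambda^2)$.

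The final and hardest step is to extract the leading $\sim c_{\mu\nu}/(\pi(\lambda t)^3)$ behavior from these Bessel series in the joint limit $t\to\infty$, $\lambda\to 0$, $\lambda t\to\infty$. Using the asymptotics $J_n(2t)\sim \pi^{-1/2}t^{-1/2}\cos(2t-n\pi/2-\pi/4)$ one can set up a stationary phase analysis for the sum after approximating it by an integral; the singularity at $n=0$ (where the amplitude behaves like $n\cdot J_n(2t)/t$ and the Bessel asymptotics breaks down) is the delicate point. I would split the sum at a cutoff of order $\lambda^{-1}$, treat small $n$ by a uniform Bessel estimate and large $n$ by Poisson-summation/stationary phase, and verify that the three different coefficients $3,-4,-1$ arise from the algebraic weights produced by the block structure (in particular from the signs and cross-terms introduced by $S_{12}$ and $S_{23}$). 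The main obstacle is precisely this $\lambda$-uniform stationary phase with singular integrand, which is new compared to \cite{pretherm, echo} and which in the $3\times 3$ case involves more channels and more terms than the $2\times 2$ case, so the algebra of the deterministic approximation must be carried out carefully to track which contributions survive to leading order in $\lambda$. Finally, $w_{21}, w_{31}, w_{32}$ and the diagonal $w_{\mu\mu}$ follow from the analog of Lemma~\ref{lem: rel wmunu} and the sum rule $\sum_\nu w_{\mu\nu}(t)=1$.
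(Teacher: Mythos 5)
Your overall strategy matches the paper's: contour-integral representation of the propagator, a two-resolvent law from \cite{ER24} to replace $\langle P_\mu G P_\nu G\rangle$ by its deterministic approximation, explicit inversion of the $3\times 3$ stability operator, expansion into a Bessel series via $\oint e^{\ii tz}m_\smc^n\,\dd z$, and a stationary-phase extraction of the $1/(\lambda t)^3$ decay. Two of your technical choices deviate from the paper's, and both are worth flagging.

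First, your contour choice. You place $\gamma_1,\gamma_2$ at distance $N^{-1+\varepsilon}$ from $[-2,2]$, which puts you in local-law territory and would cost you: a two-resolvent local law at $\eta\sim N^{-1+\varepsilon}$ has a fluctuation of order $1/(N\eta^2)$, which does not give the claimed $\mathcal{O}_\prec(C(\lambda,t)/N)$. The paper keeps the contours at $N$-independent distance $t^{-1}$ from $[-3,3]$; then a \emph{global} law suffices and $|e^{\ii t(z-\tilde z)}|\lesssim 1$ uniformly, giving the stated $1/N$ bound with a $t$-dependent constant. Second, for the asymptotics of the Bessel sums you propose to approximate the sum over $n$ by a one-dimensional integral via the large-$n$ asymptotics of $J_n$ and Poisson summation. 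The paper instead rewrites each $J_{n+2}(2t)^2$ exactly via the double integral representation, resums the geometric series inside, and runs a genuine two-dimensional stationary phase with a singular amplitude; crucially, this step is done \emph{once} in the $2\times 2$ proof (equation (3.20)/(3.21)). For the $3\times 3$ case the paper does \emph{not} redo the stationary phase: the key move, which your sketch misses, is a partial-fraction decomposition of the rational function of $\mathfrak{m}$ produced by the $3\times 3$ inverse stability operator into terms of the form $\mathfrak{m}^{l+2}/(1-r\mathfrak{m})$ with $r\in\{1,n_2^{-1},n_3^{-1}\}$, after which the already-established $2\times 2$ asymptotics apply term by term (Lemma 4.4). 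Without this reduction you would have to run a fresh $\lambda$-uniform stationary phase for the new singular amplitudes, which is considerably harder than you suggest.

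Finally, two smaller inaccuracies: your residue identity should read $\frac{1}{2\pi\ii}\oint e^{\ii tz}m_\smc(z)^n\,\dd z = \ii^{n+1}J_{n+1}(-2t)-\ii^{n-1}J_{n-1}(-2t)$ (which equals $\frac{n}{t}J_n(2t)$ only up to a phase $\ii^{3n-1}$), and the coefficients $3,-4,-1$ do not come directly from signs in $S_{12},S_{23}$ but from the interplay of the partial-fraction coefficients and the rates $n_{2,3}-1\sim\mp\lambda^{3/2}$ at which the two stability eigenvalues approach $1$ as $\lambda\downarrow 0$. Carrying out that algebra is the real content of the proof and your sketch leaves it entirely open.
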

Analogously to Theorem \ref{thm: wmunu}, having the results for the three off-diagonal $w_{\mu \nu}$'s provided in Theorem \ref{thm: wmunu3}  easily allows to provide expressions for the other $w_{\mu \nu}$'s as well (cf.~Lemma \ref{lem: rel wmunu3}). 
\begin{rmk}[Explicit form of the $w_{\mu \nu}$'s] \label{rmk:precise3}
Our proof of Theorem \ref{thm: wmunu3} actually yields explicit expressions of the
$w_{\mu \nu}$'s in \eqref{eq:wmunus3} up to an error term of order $1/N$. These expressions are of a similar principal form as \eqref{eq:w12precise} in Remark \ref{rmk:precise} involving a linear combination of infinite sums over Bessel functions. However, we refrain from stating them explicitly for brevity of the presentation. 
\end{rmk}

\begin{thm}[Approach to equilibrium]\label{thm: Pnupsi3}
			Let $H$ be a random matrix satisfying Assumption~\ref{ass:3} with $\delta = \lambda$ and take $\mu, \nu \in \{1,2,3\}$.
	Let  $\psi_0\in\mathbb{S}(\Hilbert_\mu)$ and denote $\psi_t := \ee^{- \ii t H} \psi_0$ for $t \ge 0$. Then, denoting the projector on the Hilbert space $\Hilbert_\nu \subset \Hilbert$ by $P_\nu$,  it holds that 
	\begin{subequations} \label{eq:Pnupsis3}
		\begin{align}
			\Vert P_2 \psi_t \Vert^2  &= \frac{d_2}{N} +\frac{3}{\pi(\lambda t)^3}\left(1+o(1)\right) +  \mathcal{O}_\prec\left(C(\lambda, t)/\sqrt{N} \right)  \quad \text{for} \quad \psi_0 \in \mathbb{S}(\Hilbert_1),\\
			\Vert P_3 \psi_t \Vert^2 &= \frac{d_3}{N} - \frac{4}{\pi (\lambda t)^3}\left(1+o(1)\right)+  \mathcal{O}_\prec\left(C(\lambda, t)/\sqrt{N} \right)  \quad \text{for} \quad \psi_0 \in \mathbb{S}(\Hilbert_1),
		\end{align}
as well as 
		\begin{align}
			\Vert P_1 \psi_t \Vert^2  &= \frac{d_1}{N} + \frac{3}{\pi\lambda^2 t^3}\left(1+o(1)\right)  + \mathcal{O}_\prec\left(C(\lambda, t)/\sqrt{N}\right)  \quad \text{for} \quad \psi_0 \in \mathbb{S}(\Hilbert_2),\\
			\Vert P_3 \psi_t \Vert^2 &= \frac{d_3}{N} -\frac{1}{\pi(\lambda t)^3}\left(1+o(1)\right)+ \mathcal{O}_\prec\left(C(\lambda, t)/\sqrt{N}\right)  \quad \text{for} \quad \psi_0 \in \mathbb{S}(\Hilbert_2),
		\end{align}
		 and 
				\begin{align}
				\Vert P_1 \psi_t \Vert^2  &= \frac{d_1}{N} - \frac{4}{\pi\lambda t^3}\left(1+o(1)\right) + \mathcal{O}_\prec\left(C(\lambda, t)/\sqrt{N}\right)  \quad \text{for} \quad \psi_0 \in \mathbb{S}(\Hilbert_3),\\
				\Vert P_2 \psi_t \Vert^2 &= \frac{d_2}{N} - \frac{1}{\pi \lambda^2 t^3}\left(1+o(1)\right) + \mathcal{O}_\prec\left(C(\lambda, t)/\sqrt{N}\right)  \quad \text{for} \quad \psi_0 \in \mathbb{S}(\Hilbert_3)\,. 
			\end{align}
	\end{subequations}
Here, $o(1)$ denotes a quantity vanishing in the limit $t \to \infty$, $\lambda  \to 0$, and $t \lambda \to \infty$, and $C(\lambda, t)$ is a constant depending only on its arguments. 
\end{thm}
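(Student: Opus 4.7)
The plan is to combine the result on $w_{\mu\nu}(t)$ from Theorem~\ref{thm: wmunu3} with a concentration-of-measure argument for random initial vectors on the unit sphere of $\Hilbert_\mu$. Decompose
\begin{equation*}
\|P_\nu \psi_t\|^2 = w_{\mu\nu}(t) + \bigl( \|P_\nu \psi_t\|^2 - w_{\mu\nu}(t)\bigr).
\end{equation*}
The first summand is $H$-measurable, and Theorem~\ref{thm: wmunu3} (complemented for the reverse direction $\mu > \nu$ by Lemma~\ref{lem: rel wmunu3}) already supplies the displayed leading term $d_\nu/N$ together with the $1/(\lambda t)^3$-correction, with stochastic error $O_\prec(C(\lambda,t)/N)$, which is subleading to $C(\lambda,t)/\sqrt{N}$. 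It therefore remains to show that the fluctuation term is $O_\prec(C(\lambda,t)/\sqrt{N})$ conditionally on $H$; the worsening from $1/N$ to $1/\sqrt{N}$ is natural since averaging out $\psi_0$ involves only a $d_\mu$-dimensional sphere, which concentrates on scale $d_\mu^{-1/2}$.

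For the concentration step I would write $\psi_0 = \chi/\|\chi\|$ with $\chi$ a complex standard Gaussian supported on $\Hilbert_\mu$, so that
\begin{equation*}
\|P_\nu \psi_t\|^2 = \frac{\langle \chi | A(t) | \chi\rangle}{\langle\chi|\chi\rangle}, \qquad A(t) := P_\mu \ee^{\ii Ht} P_\nu \ee^{-\ii Ht} P_\mu.
\end{equation*}
A standard Hanson--Wright / sphere-concentration estimate, exactly as used in the $2\times 2$ proof of Theorem~\ref{thm: Pnupsi}, then yields, conditionally on $H$,
\begin{equation*}
\|P_\nu \psi_t\|^2 = \frac{\tr A(t)}{d_\mu} + O_\prec\!\Bigl(\tfrac{\|A(t)\|_{\mathrm{HS}}}{d_\mu}\Bigr),
\end{equation*}
and the leading term is precisely $w_{\mu\nu}(t)$. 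Since $A(t) = BB^*$ with $B = P_\mu \ee^{\ii Ht} P_\nu$, one has $0 \le A(t) \le P_\mu$ and $\|A(t)\| \le 1$, whence $\|A(t)\|_{\mathrm{HS}}^2 = \tr A(t)^2 \le \tr A(t) = d_\mu w_{\mu\nu}(t) \le d_\mu$. Consequently $\|A(t)\|_{\mathrm{HS}}/d_\mu \le 1/\sqrt{d_\mu}$, and absorbing the powers of $\lambda$ hidden in $d_\mu = \lambda^{j_\mu} D$ (with $j_\mu \in \{0,1,2\}$) into $C(\lambda)$ produces the claimed $C(\lambda)/\sqrt{N}$ bound. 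The six cases in \eqref{eq:Pnupsis3} are obtained by specializing to each $(\mu,\nu)$ with $\mu\neq\nu$, while the diagonal cases $\mu=\nu$ (not stated explicitly) follow from $\sum_\nu \|P_\nu\psi_t\|^2 = 1$.

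The main obstacle is not the conditional concentration itself, which is essentially mechanical and mirrors the $2\times 2$ case, but rather ensuring that the $C(\lambda,t)$ prefactors inherited from the multi-resolvent local laws underlying Theorem~\ref{thm: wmunu3} remain uniformly controlled in the regime $t \gg \lambda^{-1}$ where the statement is genuinely informative, i.e.~where the $1/(\lambda t)^3$-correction dominates over the $C(\lambda,t)/\sqrt{N}$ stochastic error. As in the $2\times 2$ setting this amounts to careful bookkeeping of $\lambda$- and $t$-polynomial factors in the contour-integral representation of $\ee^{\pm \ii Ht}$, in the deterministic stationary-phase expansion, and in the error estimates of the local law, but it introduces no conceptually new difficulty beyond those already handled in the proof of Theorem~\ref{thm: wmunu3}.
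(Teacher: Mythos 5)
Your proposal proves a genuinely weaker statement than Theorem~\ref{thm: Pnupsi3}, and it also mischaracterizes the paper's two--macro-space proof that you claim to be mirroring. The theorem is stated for an \emph{arbitrary fixed} $\psi_0 \in \mathbb{S}(\Hilbert_\mu)$: the stochastic domination $\mathcal{O}_\prec(C(\lambda,t)/\sqrt{N})$ is over the randomness of $H$ alone. Your decomposition $\|P_\nu\psi_t\|^2 = w_{\mu\nu}(t) + (\|P_\nu\psi_t\|^2 - w_{\mu\nu}(t))$ together with a Hanson--Wright / sphere-concentration bound for the second term controls the fluctuation over a \emph{Haar-random} choice of $\psi_0$, conditionally on $H$. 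That yields a statement of the form ``for most $\psi_0$ on $\mathbb{S}(\Hilbert_\mu)$ and with high $H$-probability\dots'', which does not imply the stated ``for every $\psi_0$, with high $H$-probability\dots''. In particular, for a \emph{fixed} $\psi_0$ the quantity $\langle\psi_0|A(t)|\psi_0\rangle - d_\mu^{-1}\tr A(t)$ is a deterministic function of $H$, and there is no concentration-of-measure mechanism available to bound it; the smallness must come from $H$-randomness, not $\psi_0$-randomness.

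Relatedly, your assertion that the sphere-concentration step is ``exactly as used in the $2\times 2$ proof of Theorem~\ref{thm: Pnupsi}'' is incorrect: that proof does not average over $\psi_0$ at all. Instead, it writes
\begin{equation*}
\|P_\nu\psi_t\|^2 = \frac{1}{(2\pi\ii)^2}\oint_\gamma\oint_{\tilde\gamma}\ee^{\ii t(z-\tilde z)}\,\langle\psi_0|G(z)P_\nu G(\tilde z)|\psi_0\rangle\,\dd z\,\dd\tilde z
\end{equation*}
and replaces the \emph{average} two-resolvent global law by the \emph{isotropic} one,
\begin{equation*}
\langle\psi_0|G(z)P_\nu G(\tilde z)|\psi_0\rangle = \langle\psi_0|M(z,P_\nu,\tilde z)|\psi_0\rangle + \mathcal{O}_\prec\bigl(C(\lambda,t)/\sqrt{N}\bigr),
\end{equation*}
valid for any fixed deterministic $\psi_0$. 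Since $M(z,P_\nu,\tilde z)$ is diagonal and constant within each block, $\langle\psi_0|M(z,P_\nu,\tilde z)|\psi_0\rangle$ is the same for every unit $\psi_0\in\Hilbert_\mu$ and reduces to exactly the scalar integrand that appears in the $w_{\mu\nu}(t)$ computation; the double contour integral is then evaluated via Lemma~\ref{lem: integral fraction 3} just as in the proof of Theorem~\ref{thm: wmunu3}. The $1/\sqrt{N}$ (rather than $1/N$) error is the intrinsic rate of the isotropic law, not a consequence of sphere concentration. To make your proof correct, replace the Gaussian/Haar average and concentration step by this isotropic global law applied to the fixed $\psi_0$; once you do so, the rest of your bookkeeping of $\lambda$- and $t$-dependent prefactors and the reduction via Lemmas~\ref{lem: rel wmunu3} and~\ref{lem: integral fraction 3} is sound.
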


The remaining cases in \eqref{eq:Pnupsis3} (i.e.~with $P_\mu$ for $\psi_0 \in \mathbb{S}(\Hilbert_\mu)$) can easily be obtained via the sum rule $\sum_{\nu} \Vert P_\nu \psi_t \Vert^2 = 1$. Moreover, our proof of Theorem \ref{thm: Pnupsi3} actually yields similar closed expressions for the $\Vert P_\nu \psi_t\Vert^2$'s as in Remark \ref{rmk:precise3}, but we refrain from stating them for brevity of the presentation.

\subsubsection{Discussion} 
 The initial state in Theorem~\ref{thm: Pnupsi3} is again purely in one subspace.
  By analyzing the signs in the formulas we just obtained,
   we see the following: 
   \begin{itemize}
\item[(i)]  If $\psi_0\in\mathbb{S}(\Hilbert_3)$, then the equilibrium values of $\|P_\nu\psi_t\|^2$ for $\nu\neq 3$ are approached from below; 
\item[(ii)]  if $\psi_0\in\mathbb{S}(\Hilbert_2)$, then the equilibrium value of $\|P_3\psi_t\|^2$ is approached from below and the one of $\|P_1\psi_t\|^2$ from above; 
\item[(iii)]  if $\psi_0\in\mathbb{S}(\Hilbert_1)$, then the equilibrium value of $\|P_2\psi_t\|^2$ is approached from above and the one of $\|P_3\psi_t\|^2$ from below. 
   \end{itemize}
   
   \begin{figure}[h]\label{fig:3blocks}
   	\centering
   	\includegraphics[height=9cm]{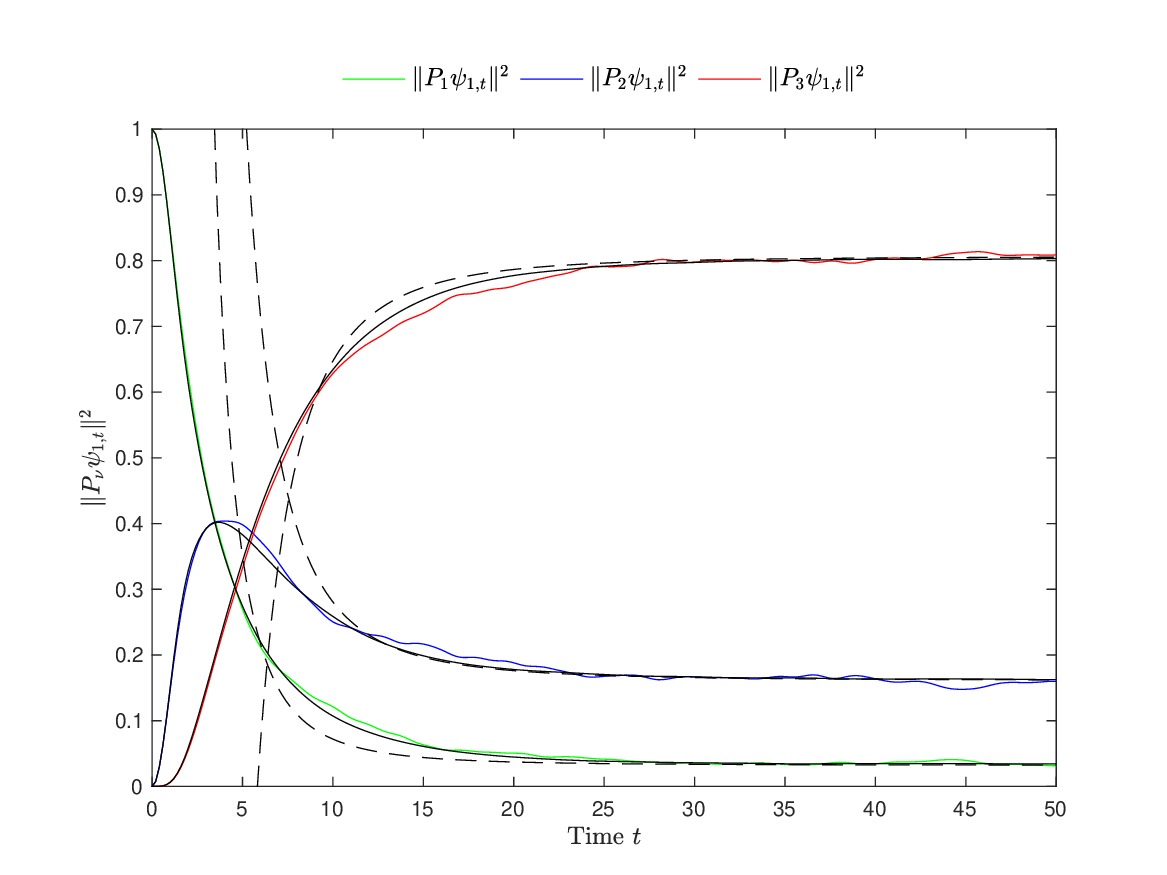}
   	\caption{Numerical simulation of the functions $t\mapsto\|P_\nu\psi_t\|^2$ for a random $3\times 3$ block matrix as in Assumption~\ref{ass:3}. Here, $\lambda=0.2$ and the Hilbert space $\Hilbert$ of dimension $N=4340$ is decomposed into 3 macro spaces of dimensions $d_1 = 140$ (green curve), $d_2=700$ (blue curve) and $d_3=3500$ (red curve). As in Figure~\ref{fig:2blocks}, the initial state $\psi_0\in\mathbb{S}(\Hilbert_1)$ was chosen purely randomly. The black solid curves are the deterministic {(w.r.t.~the randomly chosen $\psi_0$)} approximations $w_{1\nu}(t)$ and the black dashed curves are the approximations of $\|P_\nu\psi_t\|^2$ according to Theorem~\ref{thm: Pnupsi3}. Again, the dashed curves only start from $t\geq 1/(2 \lambda) = 2.5$.
   	A similar figure appeared as a purely numerical experiment in \cite{TTV23a, TTV23b} in the case of a random band matrix and four macro spaces. Here, although in a slightly different model, we rigorously prove the same qualitative behavior of the $t \mapsto \|P_\nu \psi_t\|$ curves.}
   \end{figure}
   
 More precisely, consider, e.g.,  the function $t\mapsto \|P_2\psi_t\|^2$ with $\psi_0\in\mathbb{S}(\Hilbert_1)$. It starts from zero, then jumps to a (positive) value of order one around $t\sim\lambda^{–1}$ (cf.~also the peak of the blue curve in Figure~\ref{fig:3blocks}),  and then relaxes to a value of order $\lambda$ for $t\gg \lambda^{-4/3}$. As another example, $t \mapsto \|P_3\psi_t\|^2$ approaches its equilibrium value already for $t\gg \lambda^{-1}$. That is, we observe equilibration on two different time scales. Note that the observed peak is due to our special choice of variances, $S_{13}=0$, which forbids a direct transition from $\Hilbert_1$ to $\Hilbert_3$. Analogous discussions can be made for the other curves.

Similarly to the case of a $2\times 2$ block matrix, we only wrote down the formulas $w_{12}, w_{13}$ and $w_{23}$ as all other $w_{\mu\nu}$ can be obtained from them with the help of Lemma~\ref{lem: rel wmunu3} and ``realistic'' equilibration times can be obtained by choosing the parameter $\lambda$ accordingly. We remark that the precise numbers in front of the error term $1/(\lambda t)^3$ in Theorem~\ref{thm: wmunu3} (and similarly in Theorem~\ref{thm: Pnupsi3}) do not carry any physical meaning.

\section{Two macro spaces: Proof of Theorems \ref{thm: dappmunu}--\ref{thm: Pnupsi}} \label{sec:2proof}
 In this section, we collect the proofs of our main results concerning the case of two macro spaces. 

\subsection{Proof of Theorem~\ref{thm: dappmunu}} \label{subsec:dappmunu2}
Let $u_1,\dots,u_N$ be the $\ell^2$-normalized eigenvectors of $H$ with corresponding eigenvalues $e_1\leq \dots \leq e_N$. Then we can write $\dapp_{\mu\nu}$ from \eqref{eq:Mmunudef} as
\begin{align*}
	\dapp_{\mu\nu} = \frac{1}{d_\mu} \sum_{e \in \mathrm{spec}(H) } \sum_{j,k\in I_e} \langle u_j|P_\mu|u_k\rangle\langle u_k|P_\nu|u_j\rangle,
\end{align*}
where the first sum is over all \emph{distinct} eigenvalues of $H$ and $I_e$ is the index set corresponding to a fixed energy $e$, i.e.~$I_e := \{ j \in [N] : H u_j = e u_j \}$. Therefore computing $\dapp_{\mu\nu}$ reduces to the computation of quantities of the form $\langle u_k|P_\nu|u_j\rangle$. For this computation we make use of the \emph{Eigenstate Thermalization Hypothesis} (ETH) for Wigner-type matrices, which provides a deterministic approximation to the quadratic forms $\braket{u_j | P_\nu | u_k}$, see \eqref{eq:ETH} below.  

In order to apply ETH for Wigner-type matrices, we have to check that the unique (see, e.g., \cite[Theorem~2.1]{AEK19}) solution $\mathbf{m} = (m_j)_{j=1}^N$ of the vector Dyson equation
\begin{align} \label{eq:VDE}
	-\frac{1}{m_j(z)} = z + \sum_{k=1}^{N} S_{jk} m_k(z), \quad \IM z\; \IM m_j(z)>0,
\end{align}
is bounded uniformly in $N$. Due to the structure of $S$ we obtain effectively only two equations: One for the first $d_1$ (identical) components of $\mathbf{m}$, all denoted by $m_1$, and one for the other $d_2$ (again identical) components, all denoted by $m_2$ (with a slight abuse of notation), i.e.
\begin{align*}
	-\frac{1}{m_1(z)} &= z + \frac{m_1(z)}{1+\lambda} + \frac{\lambda m_2(z)}{1+\lambda}\\
	-\frac{1}{m_2(z)} &= z + \frac{\lambda^2}{1+\lambda} m_1(z)  + \frac{1+\lambda-\lambda^2}{1+\lambda} m_2(z).
\end{align*}
We immediately see that $m_1(z)=m_2(z) = m_{\smc}(z)$, where 
\begin{equation*}
m_{\smc}(z) = \int_{\R} \frac{\rho_{\smc}(x) \dd x}{x-z} \quad \text{with} \quad \rho_{\smc}(x) := \frac{1}{2 \pi} \sqrt{[4-x^2]_+}
\end{equation*}
being the semicircular density of states, solves the equations on the upper half-plane and therefore is the unique solution of the quadratic vector equation. 

As the solution $\mathbf{m}(z)$ is obviously bounded in the $\|\cdot\|_\infty$-norm for each $z\in\mathbb{C}$ independently of $N$ and $S_{jk} \ge c(\lambda)/N$ for some constant $c(\lambda)$ depending only on its argument, uniformly in $j, k \in [N]$, we can apply the ETH for Wigner-type matrices \cite[Theorem 2.3]{ER24} and obtain\footnote{Strictly speaking, so far, ETH has only been proved in the \emph{bulk} of the spectrum (i.e.~for energies in $(-2,2)$). However, it is expected that it also holds for eigenstates corresponding to edge eigenvalues (cf.~\cite[Remark 4.1]{ER24}), which is the statement we use here.}
\begin{equation} \label{eq:ETH}
	\langle u_k|P_\nu|u_j\rangle = \delta_{kj}\frac{\langle\IM M(\gamma_j) P_\nu\rangle}{\pi\rho_{\smc}(\gamma_j)} + \mathcal{O}_\prec(C(\lambda)/\sqrt{N}) \,. 
\end{equation}
Here,  $C(\lambda)$ is a constant depending only on its argument, whose precise value might change from line to line. 

Moreover, here, $M(z) := \mbox{diag}(\mathbf{m}(z))$, $M(x) := M(x + \ii 0)$ for $x \in \R$, and $\gamma_j$ is the $j/N$-quantile of $\rho_{\smc}$, implicitly defined via
\begin{equation*}
	\int_{-\infty}^{\gamma_j} \rho_\smc(x)\, \dd x = \frac{j}{N}.
\end{equation*}
Since $\rho_\smc (\gamma_j) = \frac{1}{\pi}\, \IM m_\smc(\gamma_j)$
and $\IM M(\gamma_j) = \IM m_\smc(\gamma_j) \mathbf{1}$, we 
have
\begin{equation*}
	\langle u_k|P_\nu|u_j\rangle = \delta_{kj}\frac{d_\nu}{N} + \mathcal{O}_\prec(C(\lambda)/\sqrt{N}).
\end{equation*}
Let $\varepsilon>0$. By eigenvalue rigidity \cite[Corollary 1.11]{AEK17}, $|I_e|\leq N^\varepsilon$ with very high probability. This implies, with very high probability, that
\begin{equation}
	\begin{split}
		\dapp_{\mu\nu} &= \frac{1}{d_\mu} \sum_{e \in \mathrm{spec}(H) } \sum_{j,k\in I_e} \langle u_j|P_\mu| u_k\rangle\left(\delta_{kj} \frac{d_\nu}{N}+\mathcal{O}_\prec(C(\lambda)/\sqrt{N})\right)\\
		&= \frac{d_\nu}{N} + \frac{1}{d_\mu} \sum_{e \in \mathrm{spec}(H) } \sum_{j,k\in I_e} \langle u_j|P_\mu|u_k\rangle \mathcal{O}_\prec(C(\lambda)/\sqrt{N}) = \frac{d_\nu}{N} + \mathcal{O}_\prec(C(\lambda)/N^{1/2-\varepsilon}).
	\end{split} 
\end{equation}
	Since $\varepsilon>0$ was arbitrary, this finishes the proof. \qed 
\subsection{Proof of Theorem \ref{thm: wmunu}} \label{subsec:wmunupf}

We begin with the following lemma, providing a relation between the various $w_{\mu \nu}$'s. Its proof is given in Section \ref{subsubsec:pfwmunu}. 
\begin{lemma}[Relations between the $w_{\mu\nu}$]\label{lem: rel wmunu}
	Let $\mu, \nu \in \{1,2\}$ be two macro states and let $t\in\mathbb{R}$. Then,
	\begin{equation*}
		w_{\mu\nu}(t) = \frac{d_\nu}{d_\mu} w_{\nu\mu}(-t) \quad \text{and} \quad 
		\sum_{\nu'} w_{\mu\nu'}(t) = 1. 
	\end{equation*}
\end{lemma}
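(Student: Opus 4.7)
The plan is to prove both identities by direct algebraic manipulation of the trace formula
\[
w_{\mu\nu}(t) = \frac{1}{d_\mu}\tr\!\left[P_\mu \exp(\ii t H) P_\nu \exp(-\ii t H)\right],
\]
using only three elementary facts: cyclicity of the trace, unitarity of $\exp(\ii t H)$, and the completeness relation
\[
\sum_{\nu'} P_{\nu'} = \mathbf{1}
\]
coming from the orthogonal decomposition $\Hilbert = \bigoplus_\nu \Hilbert_\nu$ in \eqref{eq:msdecomp}.

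For the first identity, I would multiply through by $d_\mu$ and apply cyclicity of the trace to move $P_\nu$ to the front. This gives
\[
d_\mu\, w_{\mu\nu}(t) = \tr\!\left[P_\mu \exp(\ii t H) P_\nu \exp(-\ii t H)\right] = \tr\!\left[P_\nu \exp(-\ii t H) P_\mu \exp(\ii t H)\right].
\]
Recognizing the right-hand side as $d_\nu\, w_{\nu\mu}(-t)$ (since $\exp(-\ii t H) = \exp(\ii(-t)H)$ and vice versa) and dividing by $d_\mu$ yields $w_{\mu\nu}(t) = (d_\nu/d_\mu)\, w_{\nu\mu}(-t)$.

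For the sum rule, I would interchange the sum over $\nu'$ with the trace and use the completeness relation to collapse the projections, followed by unitarity of $\exp(\ii t H)$:
\[
\sum_{\nu'} w_{\mu\nu'}(t) = \frac{1}{d_\mu}\tr\!\left[P_\mu \exp(\ii t H) \Bigl(\sum_{\nu'} P_{\nu'}\Bigr) \exp(-\ii t H)\right] = \frac{1}{d_\mu}\tr\!\left[P_\mu \exp(\ii t H)\exp(-\ii t H)\right] = \frac{\tr P_\mu}{d_\mu} = 1.
\]

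Neither step is difficult — the lemma is a purely algebraic consequence of the definition of $w_{\mu\nu}$ and the structural properties of the macro-space decomposition, and there is no main obstacle to anticipate. The only point worth emphasizing is that no probabilistic input (and in particular no result from Theorem~\ref{thm: wmunu} or the local law machinery) is needed: the identities hold deterministically for any Hermitian $H$ and any orthogonal decomposition of $\Hilbert$ into subspaces $\Hilbert_{\nu'}$.
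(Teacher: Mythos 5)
Your proof is correct and matches the paper's argument essentially verbatim: the first identity follows from cyclicity of the trace (moving $P_\nu \ee^{-\ii t H}$ to the front and reading off $d_\nu w_{\nu\mu}(-t)$), and the sum rule follows from $\sum_{\nu'} P_{\nu'} = \mathbf{1}$ together with unitarity. Your closing observation that the identities are deterministic and require no probabilistic input is also the right way to think about it.
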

Hence, as already mentioned in Section \ref{sec:main}, we only have to compute $w_{12}(t)$ as with the aid of Lemma~\ref{lem: rel wmunu} all the other $w_{\mu\nu}$ can be obtained from this. For $w_{12}(t)$, we express the time evolutions as contour integrals of resolvents $G(z) = (H-z)^{-1}$:  
\begin{equation}
	\begin{split}
		w_{12}(t) &= \frac{1}{\lambda D} \tr\left[P_1 \exp(\ii tH) P_2 \exp(-\ii tH)\right]\\
		&= \frac{1}{\lambda D} \frac{1}{(2\pi \ii)^2} \oint_{\gamma}\oint_{\tilde{\gamma}} \ee^{\ii t(z-\tilde{z})} \tr(P_1 G(z) P_2 G(\tilde{z}))\, \dd z\, \dd \tilde{z}\\
		&= \frac{1+\lambda}{\lambda} \frac{1}{(2\pi \ii)^2} \oint_{\gamma}\oint_{\tilde{\gamma}} \ee^{\ii t(z-\tilde{z})} \langle P_1 G(z) P_2 G(\tilde{z})\rangle\, \dd z \, \dd \tilde{z} \label{eq: w12 PGPG}
	\end{split}
\end{equation}
where we choose the contours as
\begin{equation} \label{eq:contourchoice}
\gamma = \tilde{\gamma} = \{ z \in \C : \mathrm{dist}(z, [-3,3]) = t^{-1} \} \,. 
\end{equation}
Since $\mathrm{spec}(H) \subset [-2-\epsilon, 2+ \epsilon]$ with very high probability (as a simple consequence of eigenvalue rigidity \cite[Corollary 1.11]{AEK17}), for every $\epsilon > 0$, this choice of contours ensures the representation \eqref{eq: w12 PGPG} to be valid, again with very high probability.

In order to evaluate \eqref{eq: w12 PGPG}, we rely on the following \emph{two-resolvent global law}, i.e.~a concentration estimate for the product of resolvents and deterministic matrices appearing in \eqref{eq: w12 PGPG}. In fact, completely analogously to \cite[Proposition~3.2]{ER24} (see, in particular its proof in \cite[Appendix~A]{ER24}), we find that 
\begin{equation} \label{eq:globallaw}
	\langle P_1 G(z) P_2 G(\tilde{z})\rangle = \langle P_1 M(z,P_2,\tilde{z}) \rangle  + \mathcal{O}_\prec(C(\lambda, t)/N),
\end{equation}
uniformly in spectral parameters $z \in \gamma,\tilde{z} \in \tilde{\gamma}$. The deterministic approximation  to the random resolvents in \eqref{eq:globallaw} is given in terms of $M(z,P_2,\tilde{z})$, which is defined by
\begin{align} \label{eq:M2def}
	M(z,P_2,\tilde{z}) = m_\smc(z)m_\smc(\tilde{z})\mbox{diag}\left(\mathcal{B}^{-1}_{z,\tilde{z}}P_2^{\mathrm{diag}}\right),
\end{align}
where $P_2^{\mathrm{diag}} := (P_{2,jj})_{j=1}^N$. The operator 
\begin{equation}
	\mathcal{B}_{z,\tilde{z}} := \mathbf{1}-m_\smc(z)m_\smc(\tilde{z}) S
\end{equation}
acting on vectors in $\C^N$ is called the \emph{two-body stability operator}, whose inverse can be explicitly computed. The proof of the following Lemma \ref{lem: inverse stab op} is a direct computation and hence omitted. 

\begin{lemma}[Inverse of the stability operator]\label{lem: inverse stab op}
The inverse of the two-body stability operator, $\mathcal{B}_{z,\tilde{z}}^{-1}$, at spectral paramters $z, \tilde{z} \in \C$, is given by
\footnotesize
	\begin{align*}
		\mathcal{B}_{z,\tilde{z}}^{-1} =  
		\mathbf{1} + \frac{\mathfrak{m}}{D(1+\lambda)(1-\mathfrak{m})(1-(1-\lambda)\mathfrak{m})} \left(\begin{matrix}
			\lambda^{-1}(1-\mathfrak{m}+\lambda^2\mathfrak{m}) E_{d_1,d_1} & \lambda E_{d_1,d_2}\\
			\lambda E_{d_2,d_1} & (1+\lambda-\lambda^2- \mathfrak{m}+\lambda^2 \mathfrak{m}) E_{d_2,d_2}\\
		\end{matrix}\right),
	\end{align*}
	\normalsize
	where $\mathfrak{m}=m_\smc(z) m_\smc(\tilde{z})$.
\end{lemma}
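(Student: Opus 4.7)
The plan is to exploit the block-constant structure of the variance matrix $S$, which forces $S$ to have rank at most two. Concretely, letting $v_\mu \in \C^N$ denote the indicator vector of block $\mu$ and $a_{\mu\nu}$ the common value of $S_{kl}$ on the $(\mu,\nu)$-block read off from \eqref{eq:2variances}, we have $S = \sum_{\mu,\nu \in \{1,2\}} a_{\mu\nu}\, v_\mu v_\nu^*$, so that both the range and the co-range of $S$ equal $V := \mathrm{span}(v_1,v_2)$. The identities $\mathcal{B}_{z,\tilde{z}}^{-1} - \mathbf{1} = \mathfrak{m} S\, \mathcal{B}_{z,\tilde{z}}^{-1} = \mathfrak{m}\, \mathcal{B}_{z,\tilde{z}}^{-1} S$, immediate from $\mathcal{B}_{z,\tilde{z}}\mathcal{B}_{z,\tilde{z}}^{-1} = \mathcal{B}_{z,\tilde{z}}^{-1}\mathcal{B}_{z,\tilde{z}} = \mathbf{1}$, then force $\mathcal{B}_{z,\tilde{z}}^{-1} - \mathbf{1}$ itself to be block-constant, i.e.,
\begin{equation*}
	\mathcal{B}_{z,\tilde{z}}^{-1} = \mathbf{1} + \sum_{\mu,\nu \in \{1,2\}} x_{\mu\nu}\, E_{d_\mu,d_\nu}
\end{equation*}
for some scalars $x_{\mu\nu}$ that remain to be determined.

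Next, I would plug this ansatz into $(\mathbf{1} - \mathfrak{m} S)\mathcal{B}_{z,\tilde{z}}^{-1} = \mathbf{1}$ and use the absorbing relation $E_{d_\mu,d_\alpha} E_{d_\alpha,d_\nu} = d_\alpha\, E_{d_\mu,d_\nu}$ to collapse the identity, block by block, to a genuine $2\times 2$ matrix equation $(I - \mathfrak{m} B) X = \mathfrak{m} A$, where $X = (x_{\mu\nu})_{\mu,\nu}$, $A = (a_{\mu\nu})_{\mu,\nu}$, and $B_{\mu\alpha} := d_\alpha a_{\mu\alpha}$. Using $N = (1+\lambda) D$ one finds that
\begin{equation*}
	B = \frac{1}{1+\lambda}\begin{pmatrix} 1 & \lambda \\ \lambda^2 & 1+\lambda-\lambda^2 \end{pmatrix},
\end{equation*}
which is doubly stochastic, reflecting the doubly stochastic property of $S$ itself.

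The final step is to invert the $2\times 2$ matrix $I - \mathfrak{m} B$. A direct computation, aided by the elementary identities $2+\lambda-\lambda^2 = (1+\lambda)(2-\lambda)$ and $1+\lambda-\lambda^2-\lambda^3 = (1+\lambda)^2(1-\lambda)$, yields the clean factorization
\begin{equation*}
\det(I - \mathfrak{m} B) = (1-\mathfrak{m})\bigl(1-(1-\lambda)\mathfrak{m}\bigr).
\end{equation*}
Multiplying the adjugate of $I - \mathfrak{m} B$ by $\mathfrak{m} A$ and collecting into the common denominator then produces the four block coefficients $x_{\mu\nu}$ exactly as stated in the lemma.

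I expect the only real obstacle to be algebraic bookkeeping rather than any conceptual difficulty. In particular, the off-diagonal entries $x_{12}, x_{21}$ require recognizing a cancellation whereby the $\mathfrak{m}$-dependent parts of the adjugate conspire with the $\mathfrak{m}$-free parts after multiplication by $A$ to leave only $\lambda$ (respectively $\lambda^2$) in the numerator, while the diagonal entry $x_{11}$ invokes $1+\lambda-\lambda^2-\lambda^3 = (1+\lambda)^2(1-\lambda)$ once more to collapse into the combination $1 - \mathfrak{m} + \lambda^2\mathfrak{m}$; a symmetric simplification produces $x_{22}$. No further analytic input is required beyond these manipulations.
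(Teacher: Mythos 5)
Your proof is correct, and the paper omits its own proof ("a direct computation and hence omitted"), so your argument is a valid realization of exactly that computation. The reduction to a $2\times 2$ system is clean and sound: the observation that $\mathcal{B}_{z,\tilde z}^{-1}-\mathbf{1}=\mathfrak{m} S\,\mathcal{B}_{z,\tilde z}^{-1}=\mathfrak{m}\,\mathcal{B}_{z,\tilde z}^{-1}S$ forces both the column space and the row space of $\mathcal{B}_{z,\tilde z}^{-1}-\mathbf{1}$ into $\mathrm{span}(v_1,v_2)$, which indeed pins down the block-constant ansatz; and the absorbing relation $E_{d_\mu,d_\alpha}E_{d_\alpha,d_\nu}=d_\alpha E_{d_\mu,d_\nu}$ collapses the defining equation to $(I-\mathfrak{m}B)X=\mathfrak{m}A$ with $B_{\mu\alpha}=d_\alpha a_{\mu\alpha}$ exactly as you write. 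I checked the matrix $B$, the determinant factorization $\det(I-\mathfrak{m}B)=(1-\mathfrak{m})(1-(1-\lambda)\mathfrak{m})$ via the identities $2+\lambda-\lambda^2=(1+\lambda)(2-\lambda)$ and $1+\lambda-\lambda^2-\lambda^3=(1+\lambda)^2(1-\lambda)$, and the four entries of $X=\mathfrak{m}(I-\mathfrak{m}B)^{-1}A$, all of which reproduce the stated formula after using $N=(1+\lambda)D$.

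One small inaccuracy worth fixing: $B$ is row-stochastic but not doubly stochastic. Its row sums are $\sum_\alpha a_{\mu\alpha}d_\alpha=1$ (these are row sums of $S$), but its column sums are $d_\alpha\sum_\mu a_{\mu\alpha}$, which do not equal $1$ (for the first column one gets $(1+\lambda^2)/(1+\lambda)$). This is purely a cosmetic remark in your write-up and is not used in the argument, so the proof itself is unaffected.
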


In this way, we find that \footnotesize
\begin{align}
	M(z,P_2,\tilde{z}) = \mathfrak{m} \left(\begin{matrix}
		0&0\\
		0& I_{d_2,d_2}\\
	\end{matrix}\right)	+ \frac{\mathfrak{m}^2}{(\mathfrak{m}-1)(1+\lambda)[(1-\lambda)\mathfrak{m}-1]} \left(\begin{matrix}
		\lambda I_{d_1,d_1} & 0\\
		0 & (1+\lambda-\lambda^2-\mathfrak{m}+\lambda^2 \mathfrak{m}) I_{d_2,d_2}
	\end{matrix}\right)\label{eq: MzP2z}
\end{align}
\normalsize
and therefore
\begin{align*}
	\langle P_1 G(z) P_2 G(\tilde{z})\rangle = \frac{1}{1+\lambda} \frac{\lambda^2 \mathfrak{m}^2}{(\mathfrak{m}-1)(1+\lambda)[(1-\lambda)\mathfrak{m}-1]} + \mathcal{O}_\prec(C(\lambda, t)/N).
\end{align*}
Plugging this into \eqref{eq: w12 PGPG} and using that $|\ee^{\ii t(z-\tilde{z})}| \lesssim 1$ for $z \in \gamma$ and $\tilde{z} \in \tilde{\gamma}$, leads to
\begin{align*}
	w_{12}(t) = \frac{1}{(2\pi \ii)^2} \oint_{\gamma}\oint_{\tilde{\gamma}} \ee^{\ii t(z-\tilde{z})} \frac{\lambda \mathfrak{m}^2}{(\mathfrak{m}-1)(1+\lambda)[(1-\lambda)\mathfrak{m}-1]}\, \dd z\, \dd \tilde{z} + \mathcal{O}_\prec(C(\lambda, t)/N).
\end{align*}

To evaluate this, we make use of the following lemma, whose proof is given in Section \ref{subsubsec:pfdenomlemma}. 

\begin{lemma}[Contour integrals of $m_\smc$]\label{lem: integral fraction}
	Let $r\in(0,1)$ and let $\mathfrak{m}$ be as in Lemma~\ref{lem: inverse stab op}. Then,
	\begin{equation}
		\begin{split}
		&\frac{1}{(2\pi \ii)^2}\oint_\gamma \oint_{\tilde{\gamma}} \ee^{\ii t(z-\tilde{z})}\frac{\mathfrak{m}^2}{(1-\mathfrak{m})(1-r\mathfrak{m})}\, \dd z\, \dd \tilde{z} \\
		=  \,  &\frac{1}{1-r} \left( 1 - \sum_{n \ge 0} r^n (n+1)^2 \left(\frac{J_{n+1}(2t)}{t}\right)^2 \right) =
\frac{1}{1-r} - \frac{1}{\pi t^3 (1-r)^4}(1 + o(1)), 
		\end{split}
	\end{equation}
	where $o(1)$ vanishes in the limit $t \to \infty$, $r \uparrow 1$, and $t (1-r) \to \infty$. Moreover, $J_n$ denotes the $n$-th order Bessel function of the first kind. 
\end{lemma}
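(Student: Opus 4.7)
My plan is to uniformize the semicircle by switching from $z$ to $w = m_\smc(z)$. Using the algebraic identity $m_\smc^2 + z m_\smc + 1 = 0$, the substitution $z = -w - 1/w$ is a conformal bijection $\{0 < |w| < 1\} \to \C \setminus [-2,2]$ with inverse $w = m_\smc(z)$ and $\dd z = ((1-w^2)/w^2)\,\dd w$. Since $1 - \mathfrak{m}$ and $1 - r\mathfrak{m}$ do not vanish for $|w|, |\tilde w| < 1$ and $r \in (0,1)$, the integrand is holomorphic off $[-2,2]$ in each variable separately, so I would deform $\gamma$ and $\tilde \gamma$ inside the analyticity domain to circles $\{|w| = r_0\}, \{|\tilde w| = r_0\}$ for some $r_0 \in (0,1)$, both traversed counterclockwise (the orientation is preserved under $w \mapsto -w-1/w$, as one sees by tracking the images of $w = r_0, \ii r_0$).

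After the change of variables I would expand the rational kernel as the absolutely and uniformly convergent series
\begin{equation*}
\frac{(w\tilde w)^2}{(1-w\tilde w)(1-rw\tilde w)} = \sum_{k \ge 0}\frac{1-r^{k+1}}{1-r}(w\tilde w)^{k+2},
\end{equation*}
and combine it with the Jacobi--Anger expansion $\ee^{\mp \ii t(u + 1/u)} = \sum_n J_n(2t)(\mp \ii)^n u^n$ (obtained from $\ee^{(x/2)(u-1/u)} = \sum_n J_n(x) u^n$ by $u \to \ii u$, $x = \mp 2t$). Term by term, the double contour integral factorizes into a product of residues at $w = 0$ and $\tilde w = 0$, which I would evaluate using $J_{-n} = (-1)^n J_n$ together with the three-term recurrence $J_{k-1}(x) + J_{k+1}(x) = (2k/x) J_k(x)$: the $w$-residue collapses to $(-\ii)^{k-1} (k/t) J_k(2t)$ and the $\tilde w$-residue to $\ii^{k-1}(k/t)J_k(2t)$, so that the phases cancel in the product to give $k^2 J_k(2t)^2/t^2$. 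Reindexing $n = k-1$ yields
\begin{equation*}
I = \sum_{n \ge 1}\frac{1-r^n}{1-r}\,\frac{(n+1)^2 J_{n+1}(2t)^2}{t^2}.
\end{equation*}
To recast this in the form stated in the lemma, I would invoke the classical Bessel--Parseval identity $\sum_{m \ge 1} m^2 J_m(x)^2 = x^2/4$ (derived by differentiating $\ee^{\ii x \sin\theta} = \sum_n J_n(x)\ee^{\ii n \theta}$ in $\theta$ and integrating the squared modulus over $[0,2\pi]$). At $x = 2t$ this gives $\sum_{n \ge 0}(n+1)^2 J_{n+1}(2t)^2/t^2 = 1$, and substituting yields exactly $I = (1-r)^{-1}\bigl(1 - \sum_{n \ge 0} r^n (n+1)^2 (J_{n+1}(2t)/t)^2\bigr)$.

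The main obstacle is the asymptotic evaluation as $t \to \infty$, $r \uparrow 1$, $t(1-r) \to \infty$. Setting $\epsilon := 1-r$, I need to show $S(t,r) := \sum_{n \ge 0} r^n (n+1)^2 J_{n+1}(2t)^2 = (1+o(1))/(\pi t \epsilon^3)$, which then gives $I = 1/(1-r) - 1/(\pi t^3(1-r)^4)(1+o(1))$. The strategy is a sum-to-integral replacement combined with the uniform Debye-type asymptotic $J_\nu(2t)^2 \sim 1/(\pi \sqrt{4t^2 - \nu^2})$ in the oscillatory regime $\nu < 2t$, after averaging out the $\cos^2$ phase. Because the Abel factor $r^n$ confines the dominant mass to $n \lesssim 1/\epsilon$, which is $\ll t$ by the hypothesis $t\epsilon \to \infty$, I may replace $\sqrt{4t^2-\nu^2}$ by $2t$ in that range, reducing $S(t,r)$ to the elementary Laplace integral $\int_0^\infty \ee^{-\epsilon \nu}\nu^2\,\dd \nu/(2\pi t) = 1/(\pi t \epsilon^3)$. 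Making this rigorous is the hard part: (i) the oscillatory cosine-squared remainder must be controlled by summation by parts against the smooth weight $r^n(n+1)^2$; (ii) the Airy transition region $|\nu - 2t| = O(t^{1/3})$ and the exponentially decaying tail $\nu > 2t$ must be shown negligible, which is easy since they are suppressed by $r^{2t} = \ee^{-2t\epsilon(1+o(1))}$; and (iii) the relative error in replacing $\sqrt{4t^2 - \nu^2}$ by $2t$ is $O(\nu^2/t^2) = O(1/(t\epsilon)^2)$ throughout the dominant range, contributing only to the $o(1)$.
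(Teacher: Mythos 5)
Your proposal reaches the correct identity and asymptotics and is mathematically sound, but it takes a genuinely different route from the paper for the asymptotic part. For the exact Bessel-sum identity, the two arguments are essentially the same residue computation in disguise: the paper parametrizes the deformed contour by $z = -2\cos\theta$ (Lemma~\ref{lem: contour int}) and then invokes Graf's addition theorem (Lemma~\ref{lem: sum prod J_k}) to evaluate $T_1$ explicitly as $1 - J_1(2t)^2/t^2$, whereas you pass to $w = m_\smc(z)$, apply Jacobi--Anger, and use the Bessel--Parseval identity $\sum_{m\ge 1} m^2 J_m(x)^2 = x^2/4$ to absorb the constant. These are equivalent --- indeed $1 - J_1(2t)^2/t^2 = \sum_{m \ge 2}(m J_m(2t)/t)^2$ by the same Parseval identity --- and your version is, if anything, slightly more economical since it bypasses Graf's theorem entirely.

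The genuine divergence is in the asymptotics $t\to\infty$, $r\uparrow 1$, $t(1-r)\to\infty$. The paper represents $J_{n+2}(2t)^2$ by its angular integral, sums the Abel weights into a rational kernel $g(\theta,\theta')$ singular of order $(1-r)^{-3}$ at $\theta + \theta' = 0$, and performs a 2D stationary phase on the fixed integral over $[-\pi,\pi]^2$, with the dominant contribution coming from the two critical points of vanishing Hessian signature where the singular kernel sits; a rescaling $u\to u/(1-r)$ is needed to identify $t(1-r)$ as the effective parameter. You instead keep the sum and apply Debye asymptotics to $J_\nu(2t)^2$ directly, exploiting that the Abel weight $r^n$ localizes to $n\lesssim 1/(1-r)\ll t$ so that $\sqrt{4t^2-\nu^2}\approx 2t$ there, reducing to an explicit Laplace integral $\int_0^\infty e^{-\epsilon\nu}\nu^2\,d\nu/(2\pi t)$. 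What your route buys is that it never introduces the singular kernel: the Abel weight provides a smooth cutoff, so the only delicate step is controlling the oscillatory $\cos^2$ remainder of the Debye asymptotic by Abel summation against the unimodal weight $r^n(n+1)^2$. That step is correct --- the phase increment of $J_{n+1}(2t)$ per unit $n$ is $\approx -\pi + O(n/t)$, so writing $\cos(2\Phi(n)) = (-1)^n\cos(\Psi(n))$ with $\Psi$ slowly varying and Abel-summing bounds the oscillatory part by the total variation of the weight, which is $O(1/(\epsilon^2 t))$, i.e.\ a factor $\epsilon$ below the main term $1/(\pi t\epsilon^3)$ --- but as you note it is only sketched; the paper's version has the complementary drawback of a singular stationary-phase amplitude. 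Both are viable; the paper's choice has the advantage of extending more mechanically to the $3\times 3$ case (Lemma~\ref{lem: integral fraction 3}) where partial fractions reduce everything to the same $T_2$-type term.
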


An application of Lemma~\ref{lem: integral fraction} with $r=1-\lambda$, and recalling that $d_2/N = 1/(1 + \lambda)$, immediately shows that
\begin{equation*}
	w_{12}(t) = \frac{d_2}{N} - \frac{1}{\pi(\lambda t)^3} \left(1+o(1)\right) + \mathcal{O}_\prec(C(\lambda, t)/N).
\end{equation*}
The explicit form provided in Remark \ref{rmk:precise} is also immediate from Lemma \ref{lem: integral fraction}. Finally, as mentioned above, by Lemma~\ref{lem: rel wmunu} and using the symmetry $t \to -t$ (since $-H$ is again a matrix satisfying Assumption \ref{ass:2}), the other $w_{\mu\nu}$ can readily be obtained from $w_{12}$. This completes the proof of Theorem \ref{thm: wmunu}. \qed

We are now left with giving the proofs of Lemmas \ref{lem: rel wmunu} and \ref{lem: integral fraction}.

\subsubsection{Proof of Lemma \ref{lem: rel wmunu}} \label{subsubsec:pfwmunu}
Using the definition $w_{\mu\nu}(t) = d_\mu^{-1} \tr\left[P_\mu \exp(\ii tH)P_\nu \exp(-\ii t H)\right]$, one can easily see that $	\sum_{\nu'} w_{\mu\nu'}(t) = 1$ since $\sum_{\nu'} P_{\nu'}=1$.

Using the definition again, we find
\begin{equation*}
	w_{\mu\nu}(t) = \frac{1}{d_\mu} \tr\left(P_\mu \ee^{\ii t H} P_\nu \ee^{-\ii t H}\right) = \frac{d_\nu}{d_\mu} \frac{1}{d_\nu} \tr\left(P_\nu \ee^{-\ii tH} P_\mu \ee^{\ii t H}\right) = \frac{d_\nu}{d_\mu} w_{\nu\mu}(-t).
\end{equation*}
This concludes the proof. \qed
\subsubsection{Proof of Lemma \ref{lem: integral fraction}} \label{subsubsec:pfdenomlemma}
Since $|\mathfrak{m}| < 1$ for $z \in \gamma$ and $\tilde{z} \in \tilde{\gamma}$, we can write 
	\begin{equation}
		\frac{\mathfrak{m}^2}{(1-\mathfrak{m})(1-r\mathfrak{m})} = \frac{1}{1-r}\left(\sum_{n\geq 0} \mathfrak{m}^{n+2} - r^{-1}\sum_{k\geq 0} (r\mathfrak{m})^{k+2}\right).
	\end{equation}
Therefore we find that
\begin{align} \label{eq:T1T2}
	\frac{1}{(2\pi \ii)^2} \oint_{\gamma}\oint_{\tilde{\gamma}} \ee^{\ii t(z-\tilde{z})} \frac{\mathfrak{m}^2}{(1-\mathfrak{m})(1-r\mathfrak{m})}\, \dd z\, \dd \tilde{z} = \frac{1}{1-r}(T_1-T_2),
\end{align}
where
\begin{align}
\label{eq:T1def}	T_1 &:= \frac{1}{(2\pi \ii)^2} \sum_{n\geq 0}  \oint_{\gamma}\oint_{\tilde{\gamma}} \ee^{\ii t(z-\tilde{z})} (m\tilde{m})^{n+2}\, \dd z\, \dd\tilde{z},\\
\label{eq:T2def}	T_2 &:= \frac{1}{(2\pi \ii)^2} r^{-1} \sum_{n\geq 0} \oint_{\gamma}\oint_{\tilde{\gamma}} \ee^{\ii t(z-\tilde{z})} (rm\tilde{m})^{n+2}\, \dd z\, \dd \tilde{z}.
\end{align}
Here we introduced the shorthand notation $m=m_{\smc}(z)$ and $\tilde{m} = m_{\smc}(\tilde{z})$. These two contributions $T_1, T_2$ shall now be computed separately.

We start with computing $T_1$. The integrals decouple and we get
\begin{equation*}
	\sum_{n\geq 0} \oint_{\gamma}\oint_{\tilde{\gamma}} \ee^{\ii t(z-\tilde{z})} (m\tilde{m})^{n+2}\, \dd z\, \dd \tilde{z} = \sum_{n\geq 0} \oint_{\gamma} \ee^{\ii tz} m^{n+2}\, \dd z \oint_{\tilde{\gamma}}\ee^{-\ii t\tilde{z}} \tilde{m}^{n+2}\, \dd \tilde{z}.
\end{equation*}

These can now be evaluated with the aid of the following two lemmas, whose proofs are given in Appendix \ref{app:aux}.

\begin{lemma}[Contour Integral of $m_\smc^n$]\label{lem: contour int}
	Let $n\in\mathbb{N}$ and $\gamma \subset \C$ a smooth contour encircling $[-2,2]$ once counterclockwise. Then,
	\begin{equation}
		\frac{1}{2\pi \ii}\oint_{\gamma} \ee^{\ii tz} m_\smc(z)^n \, \dd z = \ii^{n+1} J_{n+1}(-2t)- \ii^{n-1}J_{n-1}(-2t),
	\end{equation}
	where $J_k$ is the $k$th Bessel function of the first kind.
\end{lemma}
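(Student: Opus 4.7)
The plan is to evaluate the contour integral by means of the Joukowski (Zhukovsky) substitution $z = w + w^{-1}$, which conformally identifies $\{w \in \C : |w| > 1\}$ with $\C \setminus [-2,2]$. A short computation, using the branch of $\sqrt{z^2-4}$ that behaves like $z$ at infinity, shows that $m_\smc(w + w^{-1}) = -1/w$ on $\{|w|>1\}$. Under this substitution the contour $\gamma$ is pulled back (with orientation preserved) to a counterclockwise circle $|w| = R$ with $R > 1$, and $\dd z = (1 - w^{-2}) \dd w$. The integral in question thus becomes
\[
	\frac{(-1)^n}{2 \pi \ii} \oint_{|w| = R} \ee^{\ii t (w + w^{-1})} \Bigl( \frac{1}{w^n} - \frac{1}{w^{n+2}} \Bigr) \dd w.
\]

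Next I would expand the exponential using the classical generating function for the Bessel functions of the first kind, $\ee^{(x/2)(u - u^{-1})} = \sum_{k \in \Z} J_k(x) u^k$. Specializing to $u = \ii w$ and $x = 2t$, so that $(x/2)(u - u^{-1}) = \ii t (w + w^{-1})$, yields the Laurent expansion $\ee^{\ii t (w + w^{-1})} = \sum_{k \in \Z} \ii^k J_k(2t) w^k$, valid on $\{w \neq 0\}$. By the residue theorem, each of the two integrals $\oint_{|w|=R} \ee^{\ii t(w + w^{-1})} w^{-m} \dd w$, for $m = n$ and $m = n+2$, picks out exactly one term, namely $2 \pi \ii \cdot \ii^{m-1} J_{m-1}(2t)$. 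A small amount of bookkeeping using the parity identity $J_k(-x) = (-1)^k J_k(x)$ then rearranges the resulting combination $(-1)^n \bigl[ \ii^{n-1} J_{n-1}(2t) - \ii^{n+1} J_{n+1}(2t) \bigr]$ into the stated form $\ii^{n+1} J_{n+1}(-2t) - \ii^{n-1} J_{n-1}(-2t)$.

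The only genuinely delicate points are the branch choice for $m_\smc$ and the orientation of the transformed contour. Both are routine: matching the asymptotics $m_\smc(z) \sim -1/z$ at infinity forces $m_\smc(w + w^{-1}) = -1/w$ (rather than $-w$), and the Joukowski map is orientation-preserving on $\{|w|>1\}$, so a counterclockwise $\gamma$ around $[-2,2]$ is the image of a counterclockwise circle $|w| = R$. Beyond these sign and orientation checks, the calculation is a single direct application of the residue theorem, making the Joukowski substitution together with the Bessel generating function the main workhorse of the argument.
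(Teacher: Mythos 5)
Your proof is correct, and it takes a route that is genuinely different in flavor from the paper's. The paper collapses the contour onto the branch cut $[-2,2]\pm \ii 0$, uses the boundary-value parametrization $m_\smc(z)=\ee^{\pm \ii\theta}$ with $z=-2\cos\theta$, and then identifies the resulting real integrals with Bessel functions via the integral representation $J_n(z)=\tfrac{\ii^{-n}}{\pi}\int_0^\pi \ee^{\ii z\cos\theta}\cos(n\theta)\,\dd\theta$. You instead keep the contour away from the cut, uniformize via the Joukowski map $z=w+w^{-1}$ so that $m_\smc$ becomes the elementary function $-1/w$, and then extract Laurent coefficients using the Bessel generating function together with the residue theorem. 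The two routes are dual incarnations of the same structure (setting $w=-\ee^{\ii\theta}$ converts one into the other), but the technical steps are distinct: you replace the integral representation of $J_n$ by its generating function, and you replace the careful $\pm\ii 0$ boundary-value bookkeeping by a branch check at infinity plus an orientation check. I verified the details: the branch $m_\smc(w+w^{-1})=-1/w$ on $|w|>1$ is the right one (it matches $m_\smc(z)\sim -1/z$), the Joukowski map preserves orientation there, the residue picks out the coefficient $\ii^{m-1}J_{m-1}(2t)$, and the final rearrangement using $J_k(-x)=(-1)^k J_k(x)$ gives exactly $\ii^{n+1}J_{n+1}(-2t)-\ii^{n-1}J_{n-1}(-2t)$. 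Your argument is arguably cleaner because it avoids any delicate treatment of the branch cut; the paper's version is more self-contained within the set of Bessel facts it invokes elsewhere.
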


\begin{lemma}[Sums over products of $J_k$]\label{lem: sum prod J_k}
	Let $x,y\in\mathbb{C}$, $p\in\mathbb{N}$, $q\in 2\mathbb{N}$. Then,
	\begin{align}
		\sum_{n\geq 0} \ii^{-2n} J_n(x) J_n(y) &= \frac{1}{2}\left(J_0(x) J_0(y) + J_0(|x+y|)\right), \label{eq:sumprod1}\\
		\sum_{n\geq 0} \ii^{-2(n+p)} J_{n+p}(x) J_{n+p}(-x) &= \frac{1}{2}\left(1-J_0(x)^2\right) - \sum_{k=1}^{p-1} J_k(x)^2, \label{eq:sumprod2}\\
		\sum_{n\geq 0} \ii^{-(2n+q)} J_{n+q}(x) J_n(-x) &= \frac{1}{2}\left(i^{-q} J_q(x) J_0(x) - \sum_{k=0}^{q-1}\ii^{-(2k-q)}J_k(x) J_{k-q}(-x)\right). \label{eq:subprod3}
	\end{align}
\end{lemma}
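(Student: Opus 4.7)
The plan is to derive all three identities from classical tools in the theory of Bessel functions, namely Neumann's addition formula
\begin{equation*}
J_n(x-y) = \sum_{k \in \Z} J_{n+k}(x) J_k(y),
\end{equation*}
obtained by expanding both sides of $\ee^{\ii x \sin\theta}\, \ee^{-\ii y\sin\theta} = \ee^{\ii(x-y)\sin\theta}$ via the generating function $\ee^{\ii z \sin\theta} = \sum_n J_n(z) \ee^{\ii n \theta}$ and matching Fourier coefficients, together with the elementary symmetries
\begin{equation*}
J_{-n}(z) = (-1)^n J_n(z), \qquad J_n(-z) = (-1)^n J_n(z),
\end{equation*}
which follow from the power-series definition and are valid for all $n \in \Z$, $z \in \C$. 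Two specializations of Neumann's formula that I would invoke are $\sum_{k \in \Z} (-1)^k J_k(x) J_k(y) = J_0(x+y)$ (taking $n=0$ and sending $y \mapsto -y$), and the orthogonality-type relation $\sum_{k \in \Z} J_{n+k}(x) J_k(x) = \delta_{n,0}$ (taking $y = x$).

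For \eqref{eq:sumprod1} I would start from the first specialization above, note that $(-1)^k = \ii^{-2k}$, and pair the $k$-th and $(-k)$-th terms using $J_{-k}(z) = (-1)^k J_k(z)$; these two contributions coincide, so $J_0(x+y) = J_0(x) J_0(y) + 2 \sum_{k \ge 1} \ii^{-2k} J_k(x) J_k(y)$, and rearranging yields the claim (with $J_0(x+y)$ interpreted as $J_0(|x+y|)$ when $x,y$ are real). For \eqref{eq:sumprod2} I would specialize \eqref{eq:sumprod1} to $y = -x$: since $J_0(0) = 1$ and $\ii^{-2n} J_n(x) J_n(-x) = J_n(x)^2$ by the symmetries above, this gives $\sum_{n \ge 0} J_n(x)^2 = \tfrac{1}{2}(1 + J_0(x)^2)$, and subtracting off the first $p$ terms immediately produces \eqref{eq:sumprod2}.

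For \eqref{eq:subprod3} I would use the orthogonality relation in the form $\sum_{n \in \Z} J_{n+q}(x) J_n(x) = 0$, valid since $q \neq 0$. The left-hand side of \eqref{eq:subprod3} equals $\ii^{-q} \sum_{n \ge 0} J_{n+q}(x) J_n(x)$ after cancelling $\ii^{-2n}$ against $J_n(-x) = (-1)^n J_n(x)$, so it suffices to evaluate this one-sided sum. Splitting the bilateral sum into $n \ge 0$ and $n < 0$, reindexing the negative part via $n = -m - 1$ with $m \ge 0$, and using $J_{-k}(z) = (-1)^k J_k(z)$ together with the evenness of $q$, I would rewrite it as $\sum_{j \ge 1} (-1)^j J_{q-j}(x) J_j(x)$. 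Splitting this further into $1 \le j \le q-1$, $j = q$, and $j > q$: the boundary contribution at $j = q$ is $J_0(x) J_q(x)$, while the tail $j > q$, after the reindexing $k = j - q$ and the identity $J_{q-j}(x) = (-1)^j J_{j-q}(x)$ (using $q$ even), precisely reproduces $\sum_{n \ge 1} J_{n+q}(x) J_n(x)$. This yields the self-referential equation
\begin{equation*}
2 \sum_{n \ge 0} J_{n+q}(x) J_n(x) = -\sum_{j=1}^{q-1} (-1)^j J_{q-j}(x) J_j(x),
\end{equation*}
which determines the one-sided sum. Finally, rewriting $(-1)^j J_{q-j}(x) = J_{j-q}(-x)$ (by two applications of the symmetry relations, again exploiting $q$ even) converts the correction into exactly the form demanded by the statement, up to the $k=0$ term, which is produced by combining with the isolated $\ii^{-q} J_q(x) J_0(x)$ contribution on the right-hand side.

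The main obstacle is the delicate sign bookkeeping in \eqref{eq:subprod3}: the self-referencing reindexing and the final conversion between $J_{q-j}(x)$ and $J_{j-q}(-x)$ both hinge crucially on $q$ being even, and the various $(-1)^k$ factors arising from $J_{-n}$, $J_n(-\cdot)$, and $\ii^{-(2k-q)}$ must all conspire in the correct way; once the parity accounting is carried out consistently the identity drops out.
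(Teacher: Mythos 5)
Your proposal is essentially the paper's proof recast directly in terms of $J_n$: the Neumann addition formula you invoke is the same identity as the Graf formula for $I_n$ used in the paper (via $I_n(z)=\ii^{-n}J_n(\ii z)$), and the folding and splitting of the bilateral sums proceed in exactly the same way. One bookkeeping slip in your sketch of \eqref{eq:subprod3}: the claimed identity $(-1)^j J_{q-j}(x) = J_{j-q}(-x)$ is not correct — the two symmetries give $J_{j-q}(-x)=(-1)^{j-q}J_{j-q}(x)=(-1)^{j-q}(-1)^{q-j}J_{q-j}(x)=J_{q-j}(x)$, so the $(-1)^j$ does \emph{not} get absorbed there. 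It is instead supplied by the prefactor $\ii^{-(2k-q)}=(-1)^k\ii^{q}$ in the target expression, which your narrative does not account for explicitly. Once that factor is tracked the final identity does come out correct, but as written the step ``converts the correction into exactly the form demanded'' is imprecise and should be redone carefully.
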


With Lemma~\ref{lem: contour int} we obtain
\begin{align*}
	T_1 = \sum_{n\geq 0} \left[\ii^{n+3}J_{n+3}(-2t)-i^{n+1}J_{n+1}(-2t)\right]\left[\ii^{n+3}J_{n+3}(2t)-\ii^{n+1}J_{n+1}(2t)\right].
\end{align*}
An application of Lemma~\ref{lem: sum prod J_k} immediately shows that
\begin{align*}
	\sum_{n\geq 0} \ii^{2(n+1)} J_{n+1}(-2t) J_{n+1}(2t) &= \frac{1}{2}\left(1-J_0(2t)^2\right),\\
	\sum_{n\geq 0} \ii^{2(n+3)} J_{n+3}(-2t) J_{n+3}(2t)&= \frac{1}{2}\left(1-J_0(2t)^2\right) - J_1(2t)^2 - J_2(2t)^2.
\end{align*}
Moreover, we find that
	\begin{align*}
		\sum_{n\geq 0} \ii^{2n+4} &J_{n+3}(2t) J_{n+1}(-2t) = -\frac{1}{2} J_1(2t)^2 + J_0(2t) J_2(2t)
	\end{align*}
and similarly 
\begin{align*}
	\sum_{n\geq 0} \ii^{2n+4} J_{n+3}(-2t) J_{n+1}(2t) =  -\frac{1}{2} J_1(2t)^2 + J_0(2t) J_2(2t).
\end{align*}
Putting everything together we get that the $T_1$-contribution to \eqref{eq:T1T2} with $T_1$ from \eqref{eq:T1def} is given by
\begin{equation} \label{eq:T1final}
		\frac{1}{1-r} T_1 = \frac{1}{1-r} \left(1-\frac{J_1(2t)^2}{t^2}\right),
\end{equation}
where we used that $J_{n-1}(z)+J_{n+1}(z) = (2n/z) J_n(z)$, see, e.g., \cite[9.1.27]{AS72}.

Next we turn to the computation of $T_2$ from \eqref{eq:T2def}. With Lemma~\ref{lem: contour int} and using again that $J_{n-1}(z)+J_{n+1}(z) = (2n/z) J_n(z)$ we obtain
	\begin{align} \label{eq:T2start}
		T_2 = \sum_{n\geq 0} r^{n+1} \left(\frac{(n+2)J_{n+2}(2t)}{t}\right)^2.
	\end{align}
Writing the Bessel functions via their integral representation \cite[Eq. 8.411 1.]{Gradshteyn.Ryzhik.2007} as
	\begin{equation} \label{eq:intrep}
		J_{n+2}(2t)^2 = \frac{1}{(2\pi)^2} \int_{-\pi}^\pi \int_{-\pi}^\pi \ee^{\ii n(\theta+\theta')} \ee^{2\ii(\theta+\theta')} \ee^{-2\ii t(\sin\theta+\sin\theta')}\, \dd \theta\, \dd \theta'.
	\end{equation}
we find that 
	\begin{equation} \label{eq:statphasestart}
		\begin{split}
		&\sum_{n\geq 0} r^{n+1} \left(\frac{(n+2)J_{n+2}(2t)}{t}\right)^2  \\
		= &\frac{1}{(2\pi t)^2 r} \int_{-\pi}^\pi \int_{-\pi}^\pi \left(\frac{r \ee^{\ii(\theta+\theta')}(1+r\ee^{\ii(\theta+\theta')})}{(1-r \ee^{\ii(\theta+\theta')})^3}-r\ee^{\ii(\theta+\theta')}\right) \ee^{-2\ii t(\sin\theta+\sin\theta')}\, \dd \theta\, \dd \theta',
		\end{split}
	\end{equation}
where we additionally used that $\sum_{n\geq 0} n^2 y^n = y(y+1)/(1-y)^3$ for $|y| < 1$. 

We evaluate the integral with the help of the stationary phase approximation. To this end recall the standard form of the stationary phase lemma for smooth functions $f$ with non-degenerate critical points and smooth and compactly supported $g$: 
\begin{align}
	\int_{\mathbb{R}^2} g(x) \ee^{\ii tf(x)}\, \dd x = \frac{2 \pi}{t}\sum_{x_0} \frac{\ee^{\ii tf(x_0)}  \ee^{\frac{\ii\pi}{4}\mathrm{sgn}(\mathrm{Hess}(f(x_0)))} }{\sqrt{|\det\mathrm{Hess}(f(x_0))|}} g(x_0) + o(t^{-1}), \label{eq: stat phase approx}
\end{align}
as $t\to\infty$, where the error term implicitly depends on the derivatives of $f$ and $g$ near the critical points $x_0$. The sum in \eqref{eq: stat phase approx} ranges over all critical points of $f$ and $\mbox{sgn}(\mbox{Hess}(f(x_0)))$ denotes the signature of the Hessian of $f$ at $x_0$, i.e., the number of positive minus the number of negative eigenvalues.

We aim to apply \eqref{eq: stat phase approx} with the functions\footnote{We shall henceforth ignore that our integral in \eqref{eq: stat phase approx} is only over $[-\pi, \pi]^2$, since this can easily be accommodated by a suitable cutoff function. }
\begin{align}
	f(\theta,\theta') &:= -2(\sin\theta+\sin\theta'),\\
	g(\theta,\theta')&:= \frac{r \ee^{\ii(\theta+\theta')}(1+r\ee^{\ii(\theta+\theta')})}{(1-r\ee^{\ii(\theta+\theta')})^3}-r\ee^{\ii(\theta+\theta')}.
\end{align}
The function $f$ has four non-degenerate critical points at
\begin{alignat*}{2}
x_1 &= (\pi/2, \pi/2) \,, \qquad &&x_2 = (-\pi/2, - \pi/2) \,,  \\
x_3 &= (\pi/2, -\pi/2) \,, \qquad &&x_4 = (-\pi/2, \pi/2) \,
\end{alignat*}
for which we have $\sqrt{|\det \mbox{Hess}(f(x_i))|}=2$ for all $i=1,\dots,4$ and the signature of the Hessian is given by 
\begin{alignat*}{2}
	\mathrm{sgn}(\mathrm{Hess}(f(x_1))) &= 2 \,, \qquad &&	\mathrm{sgn}(\mathrm{Hess}(f(x_2))) = -2\,,  \\
		\mathrm{sgn}(\mathrm{Hess}(f(x_3)))&= 0\,, \qquad &&	\mathrm{sgn}(\mathrm{Hess}(f(x_4)) )= 0\,.
\end{alignat*}
At the critical points $x_1, x_2$, the stationary phase lemma can be applied without any further difficulty, since, near these points, $g$ is smooth and bounded function, uniformly as $r \uparrow 1$. By simple computation, the contribution of $x_1, x_2$ is thus given by 
\begin{equation} \label{eq:easystat}
\frac{2 \pi r}{t}\left(1 - \frac{(1-r)}{(1+r)^3} \right) \sin (4t) + o(t^{-1}) \quad \text{for} \quad t \to \infty \,, 
\end{equation}
where the implicit constant in the error term is independent of $r$ and $t$. 

However, near the other two critical points, $x_3, x_4$, the function $g$ is singular as $1/(1-r)^3 \gg 1$ (recall that $r \uparrow 1$). Moreover, with each derivative, the size of $g$ gets enhanced by another factor of $(1-r)^{-1}$ close to $x_3, x_4$. Hence, integration by parts away from the stationary points yields error terms inverse polynomially in $t(1-r)$. 

We now check that $t(1-r)$ is also the effective stationary phase parameter. Indeed, in local coordinates $(x,y)$ near the stationary phase points $x_3$ or $x_4$, by Taylor-expanding both $f$ and $g$, the dominant contribution of the integral \eqref{eq:statphasestart} can be written as
\begin{equation*}
\int_B \ee^{\ii t (x^2 - y^2)} \frac{1}{((1-r) - \ii (x+y))^3} \dd x \dd y
\end{equation*}
for an appropriate $O(1)$ neighborhood $B$ of the origin in $\R^2$. Note that the signature of the Hessian of $(x,y) \mapsto x^2 - y^2$ is negative, and the integrand depends only on the sum $x+y$. Hence, by changing variables to $u := x+y$ and $v := x-y$, we find the dominant contribution to take the form
\begin{equation} \label{eq:statphase}
\int_{\tilde{B}} \ee^{\ii t uv} \frac{1}{((1-r) - \ii u)^3} \dd u \dd v
\end{equation}
for an appropriately transformed version $\tilde{B}$ of $B$. Finally, a simple rescaling $u \to u/(1-r)$, which regularizes the denominator in \eqref{eq:statphase}, shows that $t (1-r)$ arises as the effective stationary phase parameter.

In conclusion, for the stationary phase approximation near these points to be effective -- in the sense that the error term in \eqref{eq: stat phase approx} is actually smaller than the leading contribution from the stationary points, we need that $t (1-r) \gg 1$. In this regime, following through how the $o(1/t)$ error in \eqref{eq: stat phase approx} depends on the singular behavior of $g$, we computed the contribution of the stationary points $x_3, x_4$ as 
\begin{equation} \label{eq:hardstat}
\frac{2 \pi}{t} \left( \frac{r(1+r)}{(1-r)^3}-r\right) + o((1-r)^{-3}t^{-1})  \quad \text{for} \quad t(1-r) \to \infty \,, 
\end{equation}
where again the implicit constant in the error term is independent of $r$ and $t$. 

Collecting the terms from \eqref{eq:easystat}--\eqref{eq:hardstat} we find that 
	\begin{equation} \label{eq:T2}
	T_2 = 	\sum_{n\geq 0} r^{n+1}\left(\frac{(n+2)J_{n+2}(2t)}{t}\right)^2 = \frac{1}{\pi}\frac{1}{((1-r)t)^3} (1 +o(1))
	\end{equation}
	where $o(1)$ vanishes in the limit $t \to \infty$, $r \uparrow 1$, and $t (1-r) \to \infty$. Here, we in particular used that the contributions from $x_1, x_2$ in \eqref{eq:easystat} are much smaller than the contribution from $x_3, x_4$ in \eqref{eq:hardstat}, in the sense of \eqref{eq:T2}. 
Combining \eqref{eq:T1final} with \eqref{eq:T2start} and \eqref{eq:T2}, we arrive at
\begin{equation*}
	\frac{1}{1-r} (T_1-T_2)= \frac{1}{1-r} \left( 1 - \sum_{n \ge 0} r^n (n+1)^2 \left(\frac{J_{n+1}(2t)}{t}\right)^2 \right)  = \frac{1}{1-r}-\frac{1}{\pi}\frac{1}{(1-r)^4t^3} (1 +o(1)) \,, 
\end{equation*}
where $o(1)$ vanishes in the limit $t \to \infty$, $r \uparrow 1$, and $t (1-r) \to \infty$. This finishes the proof. \qed

\subsection{Proof of Theorem~\ref{thm: Pnupsi}} 
Similarly as in the proof of Theorem~\ref{thm: wmunu} we find for $\psi_0\in\mathbb{S}(\Hilbert_\mu)$ that
	\begin{align*}
		\|P_\nu\psi_t\|^2  &= \langle \psi_0| \exp(\ii tH) P_\nu \exp(-\ii tH)|\psi_0 \rangle\\
		&= \frac{1}{(2\pi \ii)^2} \oint_{\gamma}\oint_{\tilde{\gamma}} \ee^{\ii t(z-\tilde{z})} \langle \psi_0 | G(z) P_\nu G(\tilde{z}) | \psi_0 \rangle\, \dd z\, \dd \tilde{z}
	\end{align*}
	where the contours are chosen as in \eqref{eq:contourchoice}. Then, the proof proceeds just as in Section \ref{subsec:wmunupf}, except that the \emph{average} two-resolvent global law in \eqref{eq:globallaw} is replaced by the \emph{isotropic} two-resolvent global law\footnote{This can be obtained completely analogously to \cite[Proposition~3.2]{ER24} (see, in particular its proof in \cite[Appendix~A]{ER24}).}
	\begin{equation*}
\langle \psi_0| G(z) P_\nu G(\tilde{z})|\psi_0 \rangle = \langle \psi_0 | M(z, P_\nu, \tilde{z}) | \psi_0 \rangle + \mathcal{O}_\prec(C(\lambda, t)/\sqrt{N}) \,. 
	\end{equation*}
	Since $M(z, P_\nu, \tilde{z}) $ from \eqref{eq:M2def} is a diagonal matrix and constant within each block, the rest of the argument works exactly the same as in the proof of Theorem \ref{thm: dappmunu}. \qed

\section{Three macro spaces: Proof of Theorems \ref{thm: dappmunu3}, \ref{thm: wmunu3}, and \ref{thm: Pnupsi3}} \label{sec:3proof}
In this section, we collect the proofs of our main results concerning the case of three macro spaces. 
\subsection{Proof of Theorem~\ref{thm: dappmunu3}}
Let $u_1,\dots,u_N$ be the $\ell^2$-normalized eigenvectors of $H$ with corresponding eigenvalues $e_1\leq \dots \leq e_N$. Then, as in the proof of Theorem~\ref{thm: dappmunu}, we can write $\dapp_{\mu\nu}$ as
\begin{align}
	\dapp_{\mu\nu} = \frac{1}{d_\mu} \sum_{e \in \mathrm{spec}(H)} \sum_{j,k\in I_e} \langle u_j|P_\mu|u_k\rangle\langle u_k|P_\nu|u_k\rangle,
\end{align}
where the first sum is over all \emph{distinct} eigenvalues of $H$ and $I_e$ is the index set corresponding to a fixed energy $e$, i.e.~$I_e := \{ j \in [N] : H u_j = e u_j \}$.
Again, we aim to apply the ETH for Wigner-type matrices \cite[Theorem~2.3]{ER24} and therefore check that all conditions are satisfied.

First, note that, similarly to Section \ref{subsec:dappmunu2}, the unique solution to the vector Dyson equation \eqref{eq:VDE} (but now with $S$ from Assumption \ref{ass:3}), is given by $\mathbf{m}(z) = (m_\smc(z), ..., m_\smc(z))$, i.e.~the vector having all its entries equal to $m_\smc(z)$. 
This vector is obviously bounded in the $\|\cdot\|_\infty$-norm for each $z\in\mathbb{C}$ independently of $N$.

Second, the variance matrix $S$ is uniformly primitive, i.e., there exists an integer $L$ and an $N$-independent constant $c(\lambda)>0$ such that
\begin{align*}
	(S^L)_{jk} \geq \frac{c(\lambda)}{N} \,. 
\end{align*}
A direct computation shows that this is fulfilled for $L=2$. All other conditions of Theorem~2.3 in \cite{ER24} are obviously fulfilled in the present model. Therefore we can apply the ETH for Wigner-type matrices and the result follows in exactly the same way as in the proof of Theorem~\ref{thm: dappmunu}. \qed

\subsection{Proof of Theorem \ref{thm: wmunu3}} 
Just as in the proof of Theorem \ref{thm: wmunu}, we begin with the following lemma, providing a relation between the various $w_{\mu \nu}$'s. Its proof is identical to the one of Lemma \ref{lem: rel wmunu} and so omitted. 
\begin{lemma}[Relations between the $w_{\mu\nu}$]\label{lem: rel wmunu3}
	Let $\mu, \nu \in \{1,2,3\}$ be two macro states and let $t\in\mathbb{R}$. Then,
	\begin{equation*}
		w_{\mu\nu}(t) = \frac{d_\nu}{d_\mu} w_{\nu\mu}(-t) \quad \text{and} \quad 
		\sum_{\nu'} w_{\mu\nu'}(t) = 1. 
	\end{equation*}
\end{lemma}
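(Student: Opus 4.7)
The key observation is that neither identity makes any use of the specific block structure of $H$ imposed in Assumption~\ref{ass:3}, nor of the fact that there are exactly three macro spaces rather than two. The argument will carry over verbatim from Lemma~\ref{lem: rel wmunu}, and I sketch the two elementary steps briefly.

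For the sum rule, I will start from the definition of $w_{\mu\nu'}(t)$ and use that $\sum_{\nu'=1}^{3} P_{\nu'} = \mathbf{1}$, which follows from the orthogonal decomposition $\Hilbert = \Hilbert_1 \oplus \Hilbert_2 \oplus \Hilbert_3$ into mutually orthogonal macro spaces. Interchanging the sum with the trace, the two time evolutions $\ee^{\pm \ii t H}$ then collapse, leaving $d_\mu^{-1} \tr(P_\mu) = 1$.

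For the symmetry relation, I will apply cyclicity of the trace to move $P_\nu \ee^{-\ii t H}$ to the front:
\begin{equation*}
w_{\mu\nu}(t) \;=\; \frac{1}{d_\mu} \tr\!\left[P_\mu \ee^{\ii t H} P_\nu \ee^{-\ii t H}\right] \;=\; \frac{d_\nu}{d_\mu} \cdot \frac{1}{d_\nu} \tr\!\left[P_\nu \ee^{-\ii t H} P_\mu \ee^{\ii t H}\right] \;=\; \frac{d_\nu}{d_\mu}\, w_{\nu\mu}(-t).
\end{equation*}
There is no genuine obstacle here; the only ingredients are that the $\{P_{\nu'}\}$ are orthogonal projectors summing to the identity and that the trace is cyclic. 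In particular, the argument makes no reference to the number of blocks, the structure of the variance matrix $S$, the value of $\delta$, nor the value of the parameter $\lambda$, so it applies equally in the $3\times 3$ setting as in the $2\times 2$ setting of Lemma~\ref{lem: rel wmunu}.
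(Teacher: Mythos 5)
Your proof is correct and takes exactly the same route as the paper: the sum rule follows from $\sum_{\nu'} P_{\nu'} = \mathbf{1}$, and the symmetry relation from cyclicity of the trace, just as in Lemma~\ref{lem: rel wmunu} (whose proof the paper reuses verbatim for the three-block case). Your observation that neither identity depends on the block structure, the number of macro spaces, or the variance matrix is precisely why the paper simply omits the $3\times 3$ proof.
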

Hence, we only have to compute three of the $w_{\mu \nu}$'s (the rest can be deduced with the aid of Lemma \ref{lem: rel wmunu3}) and we choose 
\begin{equation} \label{eq:choiceofws}
w_{12}(t)\,, \qquad w_{13}(t)\,, \qquad \text{and} \qquad w_{23}(t) \,. 
\end{equation}
For a general $w_{\mu \nu}(t)$ from \eqref{eq:choiceofws}, we express the time evolutions as contour integrals of resolvents $G(z) = (H-z)^{-1}$:  
\begin{equation}
	\begin{split}
		w_{\mu \nu}(t) &= \frac{1}{d_\mu} \tr\left[P_\mu \exp(\ii tH) P_\nu \exp(-\ii tH)\right]\\
		&= \frac{1}{d_\mu} \frac{1}{(2\pi \ii)^2} \oint_{\gamma}\oint_{\tilde{\gamma}} \ee^{\ii t(z-\tilde{z})} \tr(P_\mu G(z) P_\nu G(\tilde{z}))\, \dd z\, \dd \tilde{z}\\
		&= \frac{N}{d_\mu} \frac{1}{(2\pi \ii)^2} \oint_{\gamma}\oint_{\tilde{\gamma}} \ee^{\ii t(z-\tilde{z})} \langle P_\mu G(z) P_\nu G(\tilde{z})\rangle\, \dd z \, \dd \tilde{z} \label{eq: wmunu PGPG}
	\end{split}
\end{equation}
where we choose the contours $\gamma, \tilde{\gamma}$ as in \eqref{eq:contourchoice}. 
Since $\mathrm{spec}(H) \subset [-2-\epsilon, 2+ \epsilon]$ with very high probability (as a simple consequence of eigenvalue rigidity \cite[Corollary 1.11]{AEK17}), for every $\epsilon > 0$, this choice of contours ensures the representation \eqref{eq: wmunu PGPG} to be valid, again with very high probability.

Analogously to the proof of Theorem \ref{thm: wmunu}, in order to evaluate \eqref{eq: wmunu PGPG}, we rely on the following \emph{two-resolvent global law} 
\begin{equation} \label{eq:globallaw3}
	\langle P_\mu G(z) P_\nu G(\tilde{z})\rangle = \langle P_\mu M(z,P_\nu,\tilde{z}) \rangle  + \mathcal{O}_\prec(C(\lambda, t)/N),
\end{equation}
uniformly in spectral parameters $z \in \gamma,\tilde{z} \in \tilde{\gamma}$. Just as \eqref{eq:globallaw}, this can be obtained completely analogously to \cite[Proposition~3.2]{ER24} (see, in particular its proof in \cite[Appendix~A]{ER24}).

The deterministic approximation  to the random resolvents in \eqref{eq:globallaw} is obtained in the exact same way as in \eqref{eq:M2def}, i.e.~given by
\begin{align} \label{eq:M2def3}
	M(z,P_\nu,\tilde{z}) = m_\smc(z)m_\smc(\tilde{z})\mbox{diag}\left(\mathcal{B}^{-1}_{z,\tilde{z}}P_\nu^{\mathrm{diag}}\right),
\end{align}
where $P_\nu^{\mathrm{diag}} := (P_{\nu,jj})_{j=1}^N$. The \emph{two-body stability operator} 
is given by $\mathcal{B}_{z,\tilde{z}} := \mathbf{1}-m_\smc(z)m_\smc(\tilde{z}) S$
and its inverse can be explicitly computed. 

The proof of the following Lemma \ref{lem: inv stab op 3} is a direct computation and hence omitted. 
\begin{lemma}[Inverse of the stability operator]\label{lem: inv stab op 3}
	Let $z,\tilde{z}\in\mathbb{C}$ and denote  $\mathfrak{m} = m_\smc(z) m_\smc(\tilde{z})$. Let $\alpha:=1+\lambda$, $\beta:=1-\lambda^3$ and $\gamma:=1+\lambda+\lambda^2$. Then the inverse of the two-body stability operator is given by
\footnotesize
\begin{align}
	\mathcal{B}_{z,\tilde{z}}^{-1} &=  \mathbf{1} - \frac{1}{D\hat{n} n_2 n_3 (1-\mathfrak{m})(1-\mathfrak{m}/n_2)(1-\mathfrak{m}/n_3)}  
\left(\begin{matrix}
	\hat{b}_{11} E_{d_1,d_1}& \hat{b}_{12} E_{d_1,d_2} & \hat{b}_{13} E_{d_1,d_3}\\
	\hat{b}_{21} E_{d_2,d_1} & \hat{b}_{22} E_{d_2,d_2} & \hat{b}_{23} E_{d_2,d_3}\\
	\hat{b}_{31} E_{d_3,d_1} & \hat{b}_{32} E_{d_3,d_2} & \hat{b}_{33}E_{d_3,d_3}\\
\end{matrix}\right) \, ,\\
&=:\mathbf{1} + \left(\begin{matrix}
	b_{11} E_{d_1,d_1}& b_{12} E_{d_1,d_2} & b_{13} E_{d_1,d_3}\\
	b_{21} E_{d_2,d_1} & b_{22} E_{d_2,d_2} & b_{23} E_{d_2,d_3}\\
	b_{31} E_{d_3,d_1} & b_{32} E_{d_3,d_2} & b_{33}E_{d_3,d_3}\\
\end{matrix}\right)\nonumber
\end{align}
\normalsize
where
\begin{align}
	\label{eq:nj} n_{2,3} = \frac{2-\lambda^2-\lambda^3\mp\sqrt{(1+\lambda)(4\lambda^3 + \lambda^4 + \lambda^5)}}{2(1-\lambda-\lambda^2)}, \qquad \hat{n} = \alpha\lambda^3(\gamma+\lambda)+\lambda^3 - \alpha\beta.
\end{align}
Moreover, the matrix entries $\hat{b}_{\mu\nu}$ are given by
\footnotesize
\begin{align*}
	\hat{b}_{11} &= \frac{1}{\gamma-\mathfrak{m}} \left[\frac{\mathfrak{m}}{\lambda^2} + \mathfrak{m}^2\alpha^2\lambda\gamma(\gamma-\mathfrak{m}\alpha)\right],\qquad
	\hat{b}_{12}=\hat{b}_{21} = \mathfrak{m}\alpha\gamma(\gamma-\mathfrak{m}\alpha),\\
	\hat{b}_{22} &= \frac{1}{R} \left[\left(\frac{\mathfrak{m}\beta}{\lambda}(\gamma-\mathfrak{m}) + (\mathfrak{m}\alpha\lambda)^2\right)\left(R(\gamma-\mathfrak{m}\alpha) - \mathfrak{m}^2\lambda^3(\gamma-\mathfrak{m})\right) + \mathfrak{m}^2\lambda^2\gamma(\gamma-\mathfrak{m})^2\right],\\
	\hat{b}_{13} &= \hat{b}_{31} = \mathfrak{m}^2 \alpha\lambda^2\gamma,\qquad
	\hat{b}_{23}=\hat{b}_{32} = \mathfrak{m}\lambda\gamma(\gamma-\mathfrak{m}),\qquad
	\hat{b}_{33} = \mathfrak{m}\alpha\left[(\gamma-\mathfrak{m})(\gamma-\mathfrak{m}\beta) - (\mathfrak{m}\alpha)^2\lambda^3\right] + \mathfrak{m}^2\lambda^3(\gamma-\mathfrak{m}),
\end{align*}
\normalsize
where we abbreviated
\begin{align*}
	R=(\gamma-\mathfrak{m}\beta)(\gamma-\mathfrak{m}) -(\mathfrak{m}\alpha)^2\lambda^3.
\end{align*}
	
\end{lemma}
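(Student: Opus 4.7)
The plan is to exploit the fact that every nonzero block of $S$ is a scalar multiple of an all-ones matrix: writing $S_{\mu\nu} = s_{\mu\nu} E_{d_\mu, d_\nu}$, for any $v \in \mathbb{C}^N$ the image $Sv$ is piecewise constant on each macro space and depends on $v$ only through the three block-sums $V_\nu := \sum_{j \in \text{block }\nu} v_j$, since $(Sv)_i = \sum_\nu s_{\mu\nu} V_\nu$ whenever $i$ lies in block $\mu$. Consequently the three-dimensional subspace of vectors constant on each block is invariant under $S$, and inverting $\mathcal{B}_{z,\tilde{z}} = \mathbf{1} - \mathfrak{m} S$ on $\mathbb{C}^N$ reduces to a $3 \times 3$ linear algebra problem. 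The ansatz $\mathcal{B}_{z,\tilde{z}}^{-1} = \mathbf{1} + B$ with $B$ of the block form claimed in the lemma is then automatic.

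Concretely, to solve $\mathcal{B}_{z,\tilde{z}} w = v$ I would sum the componentwise relation $w_i = v_i + \mathfrak{m} \sum_\nu s_{\mu\nu} W_\nu$ over $i$ in block $\mu$ to obtain the reduced system $K \mathbf{W} = \mathbf{V}$ with $K_{\mu\nu} := \delta_{\mu\nu} - \mathfrak{m}\, d_\mu s_{\mu\nu}$. Cramer's rule followed by back-substitution into the formula for $w_i$ then yields
\begin{equation*}
b_{\mu\rho} = \frac{\mathfrak{m}}{D} \sum_\nu (D s_{\mu\nu})\, (K^{-1})_{\nu\rho} \,,
\end{equation*}
which directly accounts for the overall $1/D$ prefactor in the stated formula. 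The factorization of the denominator then follows from a clean structural observation: the $3 \times 3$ matrix $\widetilde{B} := (d_\mu s_{\mu\nu})_{\mu,\nu = 1}^3$ is column-stochastic (since $S$ itself is column-stochastic by Assumption~\ref{ass:3}), hence has $1$ as an eigenvalue. Accordingly $\det(I_3 - \mathfrak{m}\widetilde{B})$ factors as $(1-\mathfrak{m})$ times a quadratic in $\mathfrak{m}$; inserting the explicit values of $s_{\mu\nu}$ at $\delta = \lambda$, one identifies the two roots of that quadratic as precisely the $n_2, n_3$ of \eqref{eq:nj}, while $\hat n$ absorbs its leading coefficient.

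The main obstacle is purely algebraic bookkeeping. The cofactors of $K$ are degree-two polynomials in $\mathfrak{m}$ with $\lambda$-dependent coefficients, and assembling them into the compact closed forms $\hat b_{\mu\nu}$ displayed in the lemma requires repeated use of the identities $\gamma = \alpha + \lambda^2$ and $\beta = (1-\lambda)\gamma$, together with the symmetry $s_{\mu\nu} = s_{\nu\mu}$ (which enforces $\hat b_{\mu\nu} = \hat b_{\nu\mu}$ and provides a useful consistency check). I would carry this out either by a careful term-by-term expansion or -- more efficiently -- with the aid of a computer algebra package, since the structural analysis above guarantees both the block form of $\mathcal{B}_{z,\tilde{z}}^{-1}$ and the factored form of its denominator a priori; only matching the numerators against the stated polynomial expressions remains.
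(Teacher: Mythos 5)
Your proposal is correct and is exactly the reduction the paper has in mind when it declares the lemma ``a direct computation and hence omitted.'' The key observations are all right: the block--all-ones structure of $S$ makes $\mathfrak{m}S$ act on $\mathbb{C}^N$ through the three block sums only, so inverting $\mathcal{B}_{z,\tilde z}=\mathbf 1-\mathfrak{m}S$ reduces to inverting the $3\times3$ matrix $K=I_3-\mathfrak{m}\widetilde B$ with $\widetilde B_{\mu\nu}=d_\mu s_{\mu\nu}$, and the back-substitution $b_{\mu\rho}=\mathfrak{m}\sum_\nu s_{\mu\nu}(K^{-1})_{\nu\rho}$ gives the claimed block form of $\mathcal{B}^{-1}_{z,\tilde z}$ automatically. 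The column-stochasticity of $\widetilde B$ (equivalently, of $S$) cleanly produces the factor $(1-\mathfrak{m})$ in $\det K$, and the two remaining roots of the quadratic cofactor are the $n_{2},n_{3}$ of~\eqref{eq:nj}.

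One small inaccuracy worth flagging: $\hat n$ is not literally the leading coefficient of that quadratic. Writing $\det K = a_2\,n_2 n_3(1-\mathfrak{m})(1-\mathfrak{m}/n_2)(1-\mathfrak{m}/n_3)$ with $a_2=\det\widetilde B$, one has in fact $\hat n = -\gamma^3 a_2$; the extra $\gamma^3$ is a normalization that the numerators $\hat b_{\mu\nu}$ absorb (each carries compensating powers of $\gamma$). Concretely, with $\delta=\lambda$ one finds $\gamma\widetilde B = \left(\begin{smallmatrix}1 & \lambda^2\alpha & 0\\ \lambda\alpha & \beta & \lambda^2\\ 0 & \lambda & \alpha\end{smallmatrix}\right)$, so $\gamma^3\det\widetilde B = \alpha\beta-\lambda^3-\lambda^3\alpha^3 = -\hat n$, matching the stated $\hat n=\alpha\lambda^3(\gamma+\lambda)+\lambda^3-\alpha\beta$ via $(1+\lambda)(\gamma+\lambda)=\alpha^2$. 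This is a bookkeeping detail rather than a gap, but it affects how you would read off $\hat n$ from the $3\times3$ determinant. With that adjustment, your argument correctly yields the lemma once the cofactors of $K$ are expanded and matched against the stated $\hat b_{\mu\nu}$, which is exactly the ``direct computation'' the authors refer to.
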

We remark that $n_{2,3}>1$ for $\lambda>0$ small enough and it holds that $n_{2,3} \downarrow 1$ as $\lambda\downarrow 0$.

In this way, we find that
\begin{align}
	\langle P_\mu M(z,P_\nu,\tilde{z})\rangle  =\mathfrak{m}\left(\delta_{\mu \nu} \frac{d_{\mu}}{N} + \frac{d_\mu d_\nu}{N} b_{\mu \nu}\right)
\end{align}
with $b_{\mu \nu} = b_{\mu \nu}(z, \tilde{z})$ and $\mathfrak{m} = \mathfrak{m}(z, \tilde{z})$ given in Lemma \ref{lem: inv stab op 3}, and therefore by \eqref{eq:globallaw3}
\begin{align*}
	\langle P_\mu G(z) P_\nu G(\tilde{z})\rangle = \mathfrak{m}\left(\delta_{\mu \nu} \frac{d_{\mu}}{N} + \frac{d_\mu d_\nu}{N} b_{\mu \nu}\right) + \mathcal{O}_\prec(C(\lambda, t)/N).
\end{align*}
Plugging this into \eqref{eq: wmunu PGPG} and using that $|\ee^{\ii t(z-\tilde{z})}| \lesssim 1$ for $z \in \gamma$ and $\tilde{z} \in \tilde{\gamma}$, leads to
\begin{equation} \label{eq:wmunugeneral}
	w_{\mu \nu}(t) = \frac{1}{(2\pi \ii)^2} \oint_{\gamma}\oint_{\tilde{\gamma}} \ee^{\ii t(z-\tilde{z})} \mathfrak{m}(z, \tilde{z})\left(\delta_{\mu \nu} + d_\nu b_{\mu \nu}(z, \tilde{z})\right)\, \dd z\, \dd \tilde{z} + \mathcal{O}_\prec(C(\lambda, t)/N).
\end{equation}

To evaluate this, we make use of the following lemma, whose proof is given at the end of this section.

\begin{lemma}[Contour integrals of $m_\smc$]\label{lem: integral fraction 3}
	Let $r_1,r_2 \in (0,1)$ and let $\mathfrak{m} = \mathfrak{m}(z, \tilde{z})$ be as in Lemma~\ref{lem: inv stab op 3}. Then, for $j =2,3$, we have that 
	\begin{equation} \label{eq: Aj}
		\begin{split}
		A_j(r_1,r_2) :=&\frac{1}{(2\pi \ii)^2} \oint_{\gamma}\oint_{\tilde{\gamma}} \ee^{\ii t(z-\tilde{z})} \frac{\mathfrak{m}^j}{(1-\mathfrak{m})(1-r_1\mathfrak{m})(1-r_2\mathfrak{m})}\, \dd z\, \dd \tilde{z}\\ =& \frac{1}{(1-r_1)(1-r_2)} - \frac{1}{\pi t^3(r_1 - r_2)} \left[ \frac{1 + o(1)}{(1 - r_1)^4} - \frac{1 + o(1)}{(1 - r_2)^4} \right] ,\\
		\widehat{A}(r_1,r_2) :=&\frac{1}{(2\pi \ii)^2} \oint_{\gamma}\oint_{\tilde{\gamma}} \ee^{\ii t(z-\tilde{z})} \frac{\mathfrak{m}^2}{(1-r_1\mathfrak{m})(1-r_2\mathfrak{m})}\, \dd z\, \dd\tilde{z}\\
		=& \frac{1}{\pi t^3 (r_1-r_2)} \left[\frac{1+o(1)}{(1-r_1)^3}-\frac{1+o(1)}{(1-r_2)^3}\right],
	\end{split}
	\end{equation}
	where $o(1)$ vanishes as $t \to \infty$,  $r_1, r_2 \uparrow 1$, and $t(1 - r_1) \to \infty$ and $t (1 - r_2) \to \infty$. 
\end{lemma}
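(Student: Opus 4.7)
The strategy is to reduce both double contour integrals to the single-parameter computations already carried out in the proof of Lemma~\ref{lem: integral fraction}, via elementary partial fraction decompositions in $\mathfrak{m}$.

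For $\widehat{A}$, I begin with the two-term identity
\begin{equation*}
\frac{1}{(1-r_1\mathfrak{m})(1-r_2\mathfrak{m})} = \frac{1}{r_1-r_2}\left(\frac{r_1}{1-r_1\mathfrak{m}} - \frac{r_2}{1-r_2\mathfrak{m}}\right),
\end{equation*}
which expresses $\widehat{A}(r_1,r_2) = (r_1-r_2)^{-1}[I(r_1) - I(r_2)]$, where
\begin{equation*}
I(r) := \frac{1}{(2\pi\ii)^2}\oint_\gamma\oint_{\tilde\gamma}\ee^{\ii t(z-\tilde z)}\frac{r\mathfrak{m}^2}{1-r\mathfrak{m}}\,\dd z\,\dd\tilde z = \sum_{n\geq 0} r^{n+1}\cdot\frac{1}{(2\pi\ii)^2}\oint_\gamma\oint_{\tilde\gamma}\ee^{\ii t(z-\tilde z)}\mathfrak{m}^{n+2}\,\dd z\,\dd\tilde z.
\end{equation*}
The last sum is exactly the $T_2$ computation from \eqref{eq:T2start}--\eqref{eq:T2}; its stationary phase analysis there yields $I(r) = \frac{1}{\pi(1-r)^3 t^3}(1+o(1))$ in the regime $t(1-r)\to\infty$, $r\uparrow 1$. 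Substituting into $(r_1-r_2)^{-1}[I(r_1) - I(r_2)]$ gives the stated asymptotics for $\widehat{A}$.

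For $A_j$, I use the three-term partial fraction decomposition
\begin{equation*}
\frac{1}{(1-\mathfrak{m})(1-r_1\mathfrak{m})(1-r_2\mathfrak{m})} = \frac{1}{(1-r_1)(1-r_2)}\frac{1}{1-\mathfrak{m}} + \frac{r_1^2}{(r_1-1)(r_1-r_2)}\frac{1}{1-r_1\mathfrak{m}} + \frac{r_2^2}{(r_2-1)(r_2-r_1)}\frac{1}{1-r_2\mathfrak{m}}.
\end{equation*}
After multiplying by $\mathfrak{m}^j$ and integrating, the last two summands are again of $I$-type and produce the $(1-r_i)^{-4}$ terms in the claimed formula; for $j=3$ one first writes $\frac{\mathfrak{m}^3}{1-r\mathfrak{m}} = \frac{1}{r}\bigl[\frac{\mathfrak{m}^2}{1-r\mathfrak{m}} - \mathfrak{m}^2\bigr]$, noting that the stray $\mathfrak{m}^2$ integrates to $4J_2(2t)^2/t^2 = O(t^{-3})$ by Lemma~\ref{lem: contour int}, which is subleading compared with the $(1-r_i)^{-3}t^{-3}$ main term. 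The first summand reduces to the contour integral of $\mathfrak{m}^j/(1-\mathfrak{m})$: expanding as $\sum_{n\geq 0}\mathfrak{m}^{n+j}$, factoring the resulting double integrals, and applying Lemmas~\ref{lem: contour int} and \ref{lem: sum prod J_k} exactly as in the derivation of $T_1 = 1 - J_1(2t)^2/t^2$ at \eqref{eq:T1final}, this equals $1+o(1)$ as $t\to\infty$, uniformly in $r_1,r_2$. Combining the three contributions and absorbing $r_i(1+o(1)) = 1+o(1)$ (since $r_i\uparrow 1$) yields the claimed formula for $A_j(r_1,r_2)$.

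\textbf{Main obstacle.} The only delicate point is the uniform validity of the stationary phase approximation for $I(r)$ as $r\uparrow 1$, where the integrand develops an order-three pole of size $(1-r)^{-3}$ near the ``bad'' stationary points $(\pm\pi/2, \mp\pi/2)$ of the phase $-2(\sin\theta+\sin\theta')$; this is exactly the issue addressed in \eqref{eq:easystat}--\eqref{eq:hardstat}, and the same analysis applies verbatim to each $I(r_i)$ separately. Beyond invoking that analysis, the only new work required here is the straightforward combinatorial bookkeeping of the three-term partial fractions and the observation that all cross terms produced by $j=3$ are lower order.
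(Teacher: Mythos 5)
Your proposal is correct and follows essentially the same route as the paper: both decompose the rational functions of $\mathfrak{m}$ by partial fractions (the paper equivalently expands and rearranges the triple geometric series), reduce to the single-parameter sums $T_1$ and $T_2$ already evaluated in the proof of Lemma~\ref{lem: integral fraction}, and invoke the same stationary-phase analysis for the singular contribution near $(\pm\pi/2,\mp\pi/2)$. The only cosmetic difference is your use of the identity $\mathfrak{m}^3/(1-r\mathfrak{m}) = r^{-1}\bigl[\mathfrak{m}^2/(1-r\mathfrak{m}) - \mathfrak{m}^2\bigr]$ for $j=3$, whereas the paper simply re-evaluates the index-shifted sums $\sum_{l\geq 0}(r_j\mathfrak{m})^{l+3}$ directly; these are the same computation.
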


Armed with \eqref{eq:wmunugeneral} and Lemma \ref{lem: integral fraction 3}, we now turn to the computation of the $w_{\mu \nu}$'s from \eqref{eq:choiceofws}. 
\\[2mm]
\underline{Computation of $w_{12}(t)$:}
From \eqref{eq:wmunugeneral} and Lemma~\ref{lem: inv stab op 3} we have
\footnotesize
\begin{align}\label{eq:w12}
	w_{12}(t)&= \frac{-\gamma\alpha\lambda}{\hat{n}n_2n_3} \frac{1}{(2\pi\ii)^2} \oint_{\gamma}\oint_{\tilde{\gamma}} \ee^{\ii t(z-\tilde{z})} \frac{-\alpha\mathfrak{m}^3 + \gamma\mathfrak{m}^2}{ (1-\mathfrak{m})(1-\mathfrak{m}/n_2)(1-\mathfrak{m}/n_3)}\,\dd z\,\dd \tilde{z}+ \mathcal{O}_\prec(C(\lambda,t)/N),\nonumber
\end{align}
\normalsize
with $\mathfrak{m} = m_\smc(z) m_\smc(\tilde{z})$ and $\alpha,\gamma,n_2,n_3$ and $\hat{n}$ defined in Lemma~\ref{lem: inv stab op 3}. The integrand in \eqref{eq:w12} can be written as
\begin{align*}
	\frac{-\alpha\mathfrak{m}^3 + \gamma\mathfrak{m}^2}{ (1-\mathfrak{m})(1-\mathfrak{m}/n_2)(1-\mathfrak{m}/n_3)} = \frac{(1+\lambda)\mathfrak{m}^2}{(1-\mathfrak{m}/n_2)(1-\mathfrak{m}/n_3)} + \frac{\lambda^2\mathfrak{m}^2}{(1-\mathfrak{m})(1-\mathfrak{m}/n_2)(1-\mathfrak{m}/n_3)}
\end{align*}
and hence we obtain
\begin{align}
	w_{12}(t) = -\gamma\alpha\lambda \frac{(1+\lambda)\widehat{A}(n_2^{-1},n_3^{-1}) + \lambda^2 A_2(n_2^{-1},n_3^{-1})}{\hat{n}n_2 n_3} + \mathcal{O}_\prec(C(\lambda,t)/N),
\end{align}
where $\widehat{A}$ and $A_2$ are defined in Lemma~\ref{lem: inv stab op 3}. For $n_{2,3}$ as in \eqref{eq:nj} we find for $j=2,3$ that
\footnotesize
\begin{align*}
	n_2^{-1} n_3^{-1} A_j(n_2^{-1},n_3^{-1}) = \frac{1}{(n_2-1)(n_3-1)} - \frac{4}{\pi t^3 \lambda^5} (1+o(1)), \quad n_2^{-1} n_3^{-1} \widehat{A}(n_2^{-1},n_3^{-1}) = \frac{3}{\pi t^3\lambda^4}(1+o(1)),
\end{align*}
\normalsize
where $o(1)$ vanishes as $t\to\infty$, $\lambda\to 0$, and $t\lambda\to\infty$. 
Plugging this back in \eqref{eq:w12}, we conclude
\begin{align}
	w_{12}(t) = \frac{d_2}{N} + \frac{3}{\pi(\lambda t)^3}(1+o(1)) + \mathcal{O}_\prec(C(\lambda,t)/N). \label{eq:w12final}
\end{align}
\underline{Computation of $w_{13}(t)$:} With \eqref{eq:wmunugeneral} and Lemma~\ref{lem: inv stab op 3} we find that 
\footnotesize
\begin{align}
	w_{13}(t) = \frac{-\alpha\gamma\lambda^2}{\hat{n}n_2 n_3}\frac{1}{(2\pi\ii)^2} \oint_{\gamma}\oint_{\tilde{\gamma}}\ee^{\ii t(z-\tilde{z})} \frac{\mathfrak{m}^3}{(1-\mathfrak{m})(1-\mathfrak{m}/n_2)(1-\mathfrak{m}/n_3)}\, \dd z\,\dd \tilde{z} + \mathcal{O}_\prec(C(\lambda,t)/N).
\end{align}
\normalsize
With the help of Lemma~\ref{lem: integral fraction 3} we obtain
\begin{align}
	w_{13}(t) = -\alpha\gamma\lambda^2 \frac{A_3(n_2^{-1},n_3^{-1})}{\hat{n}n_2 n_3} + \mathcal{O}_\prec(C(\lambda,t)/N)
\end{align}
and from this we conclude, just as in the computation of $w_{12}(t)$, that
\begin{align}
	w_{13}(t) = \frac{d_3}{N} - \frac{4}{\pi(\lambda t)^3}(1+o(1))+\mathcal{O}_\prec(C(\lambda,t)/N).\label{eq:w13final}
\end{align}
\underline{Computation of $w_{23}(t)$:} Again, it follows from \eqref{eq:wmunugeneral} together with Lemma~\ref{lem: inv stab op 3} that
\footnotesize
\begin{align}
	w_{23}(t) = - \frac{\lambda\gamma}{\hat{n}n_2 n_3} \frac{1}{(2\pi\ii)^2}\oint_{\gamma}\oint_{\tilde{\gamma}} \ee^{\ii t(z-\tilde{z}) } \frac{\gamma\mathfrak{m}^2 - \mathfrak{m}^3}{(1-\mathfrak{m})(1-\mathfrak{m}/n_2)(1-\mathfrak{m}/n_3)}\, \dd z\, \dd\tilde{z} + \mathcal{O}_\prec(C(\lambda,t)/N).
\end{align}
\normalsize
The integrand can be expressed as
\begin{align*}
	 \frac{\gamma\mathfrak{m}^2 - \mathfrak{m}^3}{(1-\mathfrak{m})(1-\mathfrak{m}/n_2)(1-\mathfrak{m}/n_3)} = \frac{\mathfrak{m}^2}{(1-\mathfrak{m}/n_2)(1-\mathfrak{m}/n_3)} + \frac{\lambda(1+\lambda)\mathfrak{m}^2}{(1-\mathfrak{m})(1-\mathfrak{m}/n_2)(1-\mathfrak{m}/n_3)}
\end{align*}
and by using Lemma~\ref{lem: integral fraction 3} we find that
\begin{align}
	w_{23}(t) = - \lambda\gamma \frac{\widehat{A}(n_2^{-1},n_3^{-1})+\lambda(1+\lambda)A_2(n_2^{-1},n_3^{-1})}{\hat{n}n_2 n_3}+\mathcal{O}_\prec(C(\lambda,t)/N).
\end{align}
As before, we finally arrive at
\begin{align}
	w_{23}(t) = \frac{d_3}{N}-\frac{1}{\pi(\lambda t)^3}(1+o(1))+\mathcal{O}_\prec(C(\lambda,t)/N).\label{eq:w23final}
\end{align}
Combining the results in \eqref{eq:w12final}, \eqref{eq:w13final}, and \eqref{eq:w23final} with the relations in Lemma~\ref{lem: rel wmunu3} together with the symmetry $t\mapsto -t$ (since $-H$ is again a matrix satisfying Assumption~\ref{ass:3}), we obtain all nine $w_{\mu\nu}$'s for $\mu,\nu\in\{1,2,3\}$. In Theorem~\ref{thm: wmunu3} we only record the off-diagonal ones. This concludes the proof. \qed

We are left with giving the proof of Lemma \ref{lem: integral fraction 3}. 

\begin{proof}[Proof of Lemma \ref{lem: integral fraction 3}]
We provide the details for the computation of $A_2$, the cases of $A_3$ and $\widehat{A}$ are analogous and hence kept brief. First, since $|\mathfrak{m}| < 1$ for $z \in \gamma$ and $\tilde{z} \in \tilde{\gamma}$, we can write 
\begin{align*}
	&\frac{\mathfrak{m}^2}{(1-\mathfrak{m})(1-r_1\mathfrak{m})(1-r_2\mathfrak{m})} = \sum_{k_1,k_2,k_3 \geq 0} \mathfrak{m}^{k_1+2} (r_1\mathfrak{m})^{k_2} (r_2\mathfrak{m})^{k_3}\\
	=& \frac{1}{(1-r_1)(1-r_2)}\sum_{l\geq 0} \mathfrak{m}^{l+2} - \frac{1}{(1-r_1)(r_1 - r_2)}\sum_{l\geq 0} (r_1\mathfrak{m})^{l+2} + \frac{1}{(1-r_2)(r_1 - r_2)}\sum_{l\geq 0} (r_2\mathfrak{m})^{l+2}.
\end{align*}
and similarly
\begin{align*}
	\frac{\mathfrak{m}^2}{(1-r_1\mathfrak{m})(1-r_2\mathfrak{m})} 
	&= \frac{1}{r_1(r_1-r_2)} \sum_{l\geq 0} (r_1\mathfrak{m})^{l+2} - \frac{1}{r_2(r_1-r_2)} \sum_{l\geq 0} (r_2\mathfrak{m})^{l+2}.
\end{align*}
Hence, there are three sums to evaluate. 

For the first sum, with the aid of Lemmas~\ref{lem: contour int}--\ref{lem: sum prod J_k}, we find that 
\begin{equation} \label{eq:const}
	\begin{split}
		&\frac{1}{(2\pi \ii)^2} \oint_{\gamma}\oint_{\tilde{\gamma}} \ee^{\ii t(z-\tilde{z})} \sum_{l\geq 0} \mathfrak{m}^{l+2}\, \dd z\, \dd\tilde{z} \\
		= &\sum_{l\geq 0} \left[\ii^{l+3}J_{l+3}(-2t)-\ii^{l+1} J_{l+1}(-2t)\right] \left[\ii^{l+3}J_{l+3}(2t) - \ii^{l+1} J_{l+1}(2t)\right] = 1 - \frac{J_1(2t)^2}{t^2} \,. 
	\end{split}
\end{equation}
From \eqref{eq:T2} we find that
\begin{align}
	\frac{1}{(2\pi \ii)^2}\oint_{\gamma}\oint_{\tilde{\gamma}} \ee^{\ii t(z-\tilde{z})}&\sum_{l\geq 0}\left(r_j\mathfrak{m}\right)^{l+2}\, \dd z\, \dd \tilde{z}= \frac{1}{\pi} \frac{1}{((1-r_j)t)^3}(1+o(1)),\label{eq:timedep}
\end{align}
where $o(1)$ vanishes as $t\to\infty$, $r_j\uparrow 1$ and $t(1-r_j)\to \infty$.
The results for $A_2$ and $\widehat{A}$ immediately follow from these computations.

To obtain the result for $A_3$, we compute, for $j=1,2$, 
\begin{alignat*}{2}
	&\frac{1}{(2\pi i)^2}\oint_{\gamma}\oint_{\tilde{\gamma}} \ee^{\ii t(z-\tilde{z})}\sum_{l\geq 0} \mathfrak{m}^{l+3}\, \dd z\, \dd\tilde{z} &&=  1- \frac{J_1(2t)^2}{t^2}-4\frac{J_2(2t)^2}{t^2},\\
	&\frac{1}{(2\pi \ii)^2} \oint_{\gamma}\oint_{\tilde{\gamma}} \ee^{\ii t(z-\tilde{z})} \sum_{l\geq 0}\left(r_j\mathfrak{m}\right)^{l+3}\,\dd z\, \dd\tilde{z} &&= \frac{1}{\pi} \frac{1}{((1- r_j)t)^3} (1 + o(1)) ,
\end{alignat*}
where again $o(1)$ vanishes as $t \to \infty$, $r_j \uparrow 1$ and $t (1-r_j) \to \infty$. 
From this the result for $A_3$ follows immediately.
\end{proof} 
\subsection{Proof of Theorem~\ref{thm: Pnupsi3}} With the same modifications as needed in the proof of Theorem \ref{thm: Pnupsi} in Section \ref{sec:2proof}, the argument works exactly the same as in the proof of Theorem \ref{thm: dappmunu3}. We leave the details to the reader. \qed

\appendix

\section{Additional proofs} \label{app:aux}
\subsection{Proof of Lemma \ref{lem: contour int}}
This is a direct computation. Since $m_\smc(z)$ is holomorphic in $\C \setminus [-2,2]$, we can deform the contour $\gamma$ to be $ \gamma = [-2,2] \pm \ii 0$. With the help of the parametrization $m_\smc(z)=\exp(\pm \ii\theta)$ for $z \in \gamma$ where $\theta = \theta(z)$ is defined by $z=-2\cos(\theta)$, we find that 
	\begin{align*}
		\oint_{\gamma} \ee^{\ii tz} m_\smc^{n}(z)\, \dd z &= \int_{-2}^2 \ee^{\ii tz} \ee^{-\ii\theta(z)n}\, \dd z - \int_{-2}^2 e^{itz} e^{i\theta(z)n}\, \dd z\\
		&= -2\ii \int_{-2}^2 \ee^{\ii tz} \sin(\theta(z)n)\, \dd z\\
		&= -4\ii \int_{0}^\pi \ee^{-2\ii t\cos(\theta)} \sin(\theta n) \sin(\theta)\, \dd \theta\\
		&= 2\ii \int_0^\pi \ee^{–2\ii t\cos(\theta)} \left(\cos(\theta(n+1))-\cos(\theta(n-1))\right) \dd\theta\\
		&= 2\pi \ii \left(\ii^{n+1} J_{n+1}(-2t)- \ii^{n-1}J_{n-1}(-2t)\right) \,. 
	\end{align*}
In the last step we used that \cite[9.1.21]{AS72}
\begin{align*}
	J_n(z) = \frac{\ii^{-n}}{\pi} \int_0^\pi \ee^{\ii z\cos(\theta)} \cos(n\theta)\, \dd \theta \,. 
\end{align*}
Dividing by $2\pi \ii$ gives the result. \qed

\subsection{Proof of Lemma~\ref{lem: sum prod J_k}}
	The sums can be computed with the help of Graf's addition theorem \cite[Eq. 9.1.79]{AS72} and some symmetries of the Bessel functions. Graf's addition theorem states that for any non-zero constant $c\in\mathbb{C}$, $x,y\in\mathbb{C}$ and $v\in\mathbb{N}_0$,
	\begin{align}
		\sum_{n=-\infty}^{\infty} c^n I_{n+v}(x) I_n(y) = \left(\frac{x+yc^{-1}}{x+yc}\right)^{v/2} I_v \left(\sqrt{x^2+y^2+xy(c+c^{-1})}\right), \label{eq: Graf}
	\end{align}
	where $I_n$ denotes the $n$-th modified Bessel function. The relation between $I_n$ and $J_n$ is given by $I_n(x)=\ii^{-n}J_n(\ii x)$.
	
	If $c=1$, we can use symmetries to turn \eqref{eq: Graf} into a formula for a sum which is only over the non-negative integers. Suppose that $v=0$ in \eqref{eq: Graf}. Then using  $J_{-n}(x)=(-1)^n J_n(x)$ and the relation between $I_n$ and $J_n$ we have that $I_{-n}(x) = (-1)^n \ii^{2n} I_n(x)$ and hence obtain
	\begin{align}
		\sum_{n\geq 0} I_n(x) I_n(y) = \frac{1}{2}\left(I_0(x)I_0(y)+I_0(|x+y|)\right). \label{eq: Bessel 1}
	\end{align}
	This immediately implies \eqref{eq:sumprod1}. 
	
	Next, a simple computation shows that 
		\begin{equation*}
			\sum_{n\geq 0} I_{n+p}(x) I_{n+p}(-x)= \frac{1}{2}\left(1-J_0(ix)^2\right)-\sum_{k=1}^{p-1} J_k(ix)^2\,, 
		\end{equation*}
	where we used \eqref{eq: Bessel 1} and that $I_0(0)=1$ and $J_k(-x) = (-1)^k J_k(x)$.
	Substituting $x\to x/\ii$ and using again the relation between $I_n$ and $J_n$ yields \eqref{eq:sumprod2}. 
	
To prove \eqref{eq:subprod3}, we note that 
		\begin{equation*}
			\sum_{n=-\infty}^{\infty} I_{n+q}(x) I_n(-x) = 2 \sum_{n\geq 0} I_{n+q}(x) I_n(-x) + \sum_{k=0}^{q-1} I_k(x) I_{k-q}(-x) - I_q(x) I_0(x)
		\end{equation*}
which yields
		\begin{equation*}
			\sum_{n\geq 0} I_{n+q}(x) I_n(-x)=\frac{1}{2}\left(I_q(x) I_0(x) - \sum_{k=0}^{q-1}I_k(x) I_{k-q}(-x)\right),
		\end{equation*}
by application of Graf's addition theorem \eqref{eq: Graf}. After rewriting the $I_n$ in terms of $J_n$ and replacing $x$ by $x/\ii$ we conclude \eqref{eq:subprod3}. \qed

\subsection*{Acknowledgments.}
L.E.~and J.H.~are supported by the ERC Advanced Grant ``RMTBeyond'' No.~101020331. Moreover, J.H.~acknowledges (partial) financial support by the ERC Consolidator Grant ``ProbQuant'' (jointly with the Swiss State Secretariat for Education, Research and Innovation). C.V.~was (partially) supported by the German Academic Scholarship Foundation and the Deutsche Forschungsgemeinschaft (DFG, German Research Foundation) -- TRR 352 -- Project-ID 470903074. Moreover, C.V.~acknowledges (partial) financial support by the ERC Starting Grant ``FermiMath" No.~101040991 and the ERC Consolidator Grant ``RAMBAS'' No. 10104424, funded by the European Union. Views and opinions expressed are however those of the authors only and do not necessarily reflect those of the European Union or the European Research Council Executive Agency. Neither the European Union nor the granting authority can be held responsible for them.

\subsection*{Data Availability Statement.} The Matlab code used to generate the datasets of the provided examples is available from the corresponding author on request. 

\subsection*{Conflict of Interest Statement.} The authors have no conflicts of interest.

\bibliographystyle{plainurl}
\bibliography{Literature.bib}
\end{document}